\documentclass[aps,prx,twocolumn,longbibliography,floatfix,10pt]{revtex4-2}
\usepackage{graphicx}
\usepackage[english]{babel}

\makeatletter
\adddialect\l@en\l@english
\makeatother
\usepackage{amsmath,amssymb,braket}
\usepackage{tikz}
\usepackage[caption=false]{subfig}

\usepackage[percent]{overpic}
\usepackage{booktabs}
\usepackage[colorlinks=true,
            linkcolor=blue,
            citecolor=blue,
            urlcolor=blue,
            hypertexnames=false]{hyperref}
\usepackage{float} 
\makeatletter
\let\newfloat\newfloat@ltx
\makeatother
\usepackage{algorithm}
\newcommand{\cobalt}{DysonNet}
\usepackage[noend]{algpseudocode}
\usepackage{xcolor}

\usepackage{amsthm}
\usepackage{bm}

\theoremstyle{plain}
\newtheorem{theorem}{Theorem}

\newtheorem{proposition}[theorem]{Proposition}

\theoremstyle{definition}
\newtheorem{definition}[theorem]{Definition}

\newtheorem{invariant}{Invariant}
\theoremstyle{remark}
\newtheorem{remark}[theorem]{Remark}

\usepackage{microtype}

\newcommand{\vect}[1]{\mathbf{#1}}

\newcommand{\norm}[1]{\left\lVert #1\right\rVert}

\begin{document}

\title{DysonNet: Constant-Time Local Updates for Neural Quantum States}
\author{Lucas Winter}
\affiliation{Faculty of Physics, University of Vienna, Boltzmanngasse 5, 1090 Vienna, Austria}
\author{Andreas Nunnenkamp}
\affiliation{Faculty of Physics, University of Vienna, Boltzmanngasse 5, 1090 Vienna, Austria}
\date{\today}

\begin{abstract}
Neural quantum states (NQS) provide a flexible variational framework for many-body wavefunctions, but suffer from high computational cost and limited interpretability. We introduce DysonNet, a broad class of NQS that couples strictly local nonlinearities through global linear layers. This structure is analogous to a truncated Dyson series which gives an intuitive interpretation of local wavefunction updates as scattering from static impurities. By resumming the scattering series, single-spin-flip updates can be computed in $\mathcal{O}(1)$ time, independent of system size, using an algorithm we call ABACUS. Implementing DysonNet with the state-space model S4, we obtain up to $230\times$ speedups over Vision-Transformers for computing the local estimator. This corresponds to an asymptotic $\mathcal{O}(N^2)$ improvement in training-time scaling, reaching $\mathcal{O}(N \log^2 N)$ total training complexity in area-law phases. Benchmarks on the 1D long-range Ising model and frustrated $J_1$-$J_2$ chains show that DysonNet matches state-of-the-art NQS accuracy while removing the dominant local-update overhead. More broadly, our results suggest a route to scalable NQS architectures where physical interpretability directly enables computational efficiency.
\end{abstract}

\maketitle

\section{Introduction}

Neural quantum states (NQS) have emerged as a flexible and potentially scalable class of variational wave-function ansätze ~\cite{carleo_solving_2017, lange_architectures_2024}. Recently, neural quantum states have achieved state-of-the-art accuracy across a wide domain of challenging ground state problems like frustrated magnets ~\cite{westerhout_generalization_2020, viteritti_transformer_2023, chen_empowering_2024}, the Hubbard model ~\cite{roth_superconductivity_2025, gu_solving_2025, ibarra-garcia-padilla_autoregressive_2025}, long-ranged quantum systems~\cite{roca-jerat_transformer_2024, trigueros_simplicity_2024}, and fractional quantum Hall problems ~\cite{glasser_neural-network_2018, geier_is_2025}. Many architectures have been proposed including restricted Boltzmann machines (RBM) \cite{carleo_solving_2017, melko_restricted_2019}, convolutional networks (CNN) \cite{liang_solving_2018, fu_lattice_2022, levine_quantum_2019}, recurrent networks (RNN) \cite{hibat-allah_recurrent_2020, doschl_neural_2024}, and Transformers (ViT) \cite{sharir_deep_2020, viteritti_transformer_2023, roca-jerat_transformer_2024, sprague_variational_2024, chen_convolutional_2025}.
One of the main promises of NQS is that their nonlinear structure gives them higher expressive power than conventional Ansätze such as tensor network states (TNS) ~\cite{wu_tensor-network_2023, deng_quantum_2017, paul_bound_2025}. 

The downside of this nonlinear structure is that to update the state after a single spin flip forces a full evaluation of the network.
For restricted Boltzmann machines (RBM), local update schemes exist, but their costs scale with system size ~\cite{carleo_solving_2017, nagy_variational_2019, pan_efficiency_2024, chen_systematic_2022, gao_efficient_2017}.
Autoregressive (AR) factorization of the wavefunction can make sampling more efficient, but it leaves local update costs unchanged ~\cite{sharir_deep_2020, hibat-allah_recurrent_2020}. There are specialized Ansätze like finite receptive field CNNs or local graph neural networks that can trivially be updated in $\mathcal O(1)$ but this is achieved by neglecting any long-range correlations.
To our knowledge, no current method endows deep NQS with constant time local updates while retaining their broad expressivity.

\begin{table}[t]
\centering
\label{tab:complexity_single_col}
\resizebox{\columnwidth}{!}{%
\begin{tabular}{lccc}
\toprule
\textbf{Model} & \textbf{Update} & \textbf{Train} & \textbf{Memory} \\
\midrule
ViT~\cite{vaswani_attention_2017,dosovitskiy_image_2021} & $\mathcal{O}(N^2)$ & $\mathcal{O}(N^3\tau)$ & $\mathcal{O}(N^3)$ \\
ResNet/CNN~\cite{lecun_convolutional_1998,he_deep_2016} & $\mathcal{O}(N^2)$ & $\mathcal{O}(N^3\tau)$ & $\mathcal{O}(N^3)$ \\
RBM~\cite{hinton_training_2002,carleo_solving_2017} & $\mathcal{O}(N)$ & $\mathcal{O}(N^2\tau)$ & $\mathcal{O}(N^2)$ \\
RNN (AR)~\cite{hochreiter_long_1997,hibat-allah_recurrent_2020} & $\mathcal{O}(N)$ & $\mathcal{O}(N^2)$ & $\mathcal{O}(N^2)$ \\
PixelCNN (AR)~\cite{oord_conditional_2016,sharir_deep_2020} & $\mathcal{O}(N)$ & $\mathcal{O}(N^2)$ & $\mathcal{O}(N^2)$ \\
\midrule
\textbf{\cobalt} (Area law) & ${\mathcal{O}(1)}$ & ${\mathcal{O}(N \log^2 N)}$ & ${\mathcal{O}(N )}$ \\ 
\textbf{\cobalt}  & ${\mathcal{O}(1)}$ & $\mathcal{O}(N\mathcal T(N)\tau )$ & ${\mathcal{O}(N )}$ \\ 
\bottomrule
\end{tabular}%
}
\caption{\textbf{Asymptotic complexity of \cobalt\ versus standard NQS architectures with global receptive field. }
\textbf{Update}: Cost of a single spin flip given cached environment tensors. 
\textbf{Train}: Cost per gradient step, including Markov-chain sampling and local Hamiltonian evaluation. Metropolis approaches are sampling-dominated, while autoregressive (AR) approaches are local-estimator-dominated.
$\tau$ represents autocorrelation time; $\mathcal T(N)$, defined in Eq.~\eqref{eq:TN}, is sublinear or logarithmic. 
\textbf{Memory}: Peak memory under batched evaluation strategy. Uniquely among deep NQS, \cobalt\ enables constant-time $\mathcal O(1)$ local updates, allowing for log-linear training costs in area-law regimes. In critical phases, it improves upon prior Markov chain Monte Carlo NQS via subquadratic sweep costs and is the only approach with strictly linear peak memory. Baseline classes are cited row-wise; \cobalt\ scaling follows Theorem~\ref{thm:abacus} and Section~\ref{app:sampling-complexity}.}
\vspace{-0.4cm}
\end{table}

\begin{figure*}
  \centering
  \subfloat[]{%
    \begin{overpic}[width=0.32\textwidth,
      trim=2mm 0mm 0mm 2mm, clip]{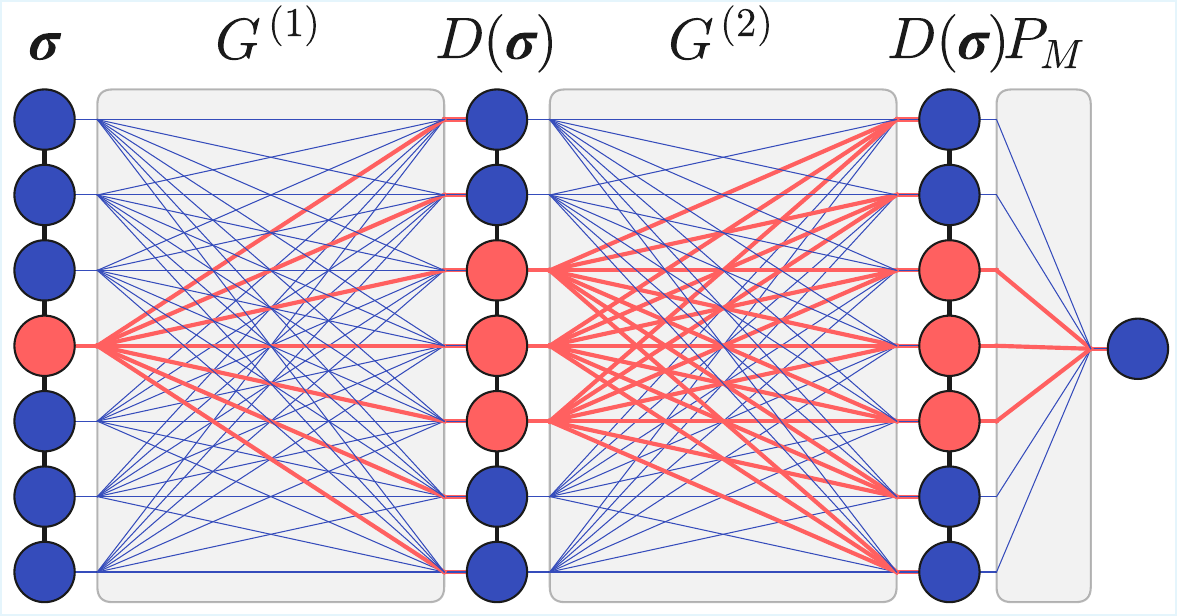}
      \put(-10,45){(a)}%
    \end{overpic}%
    \label{fig:COBALT}}
  \hfill
  \subfloat[]{%
    \begin{overpic}[width=0.32\textwidth,
      trim=15mm 15mm 15mm -1mm, clip]{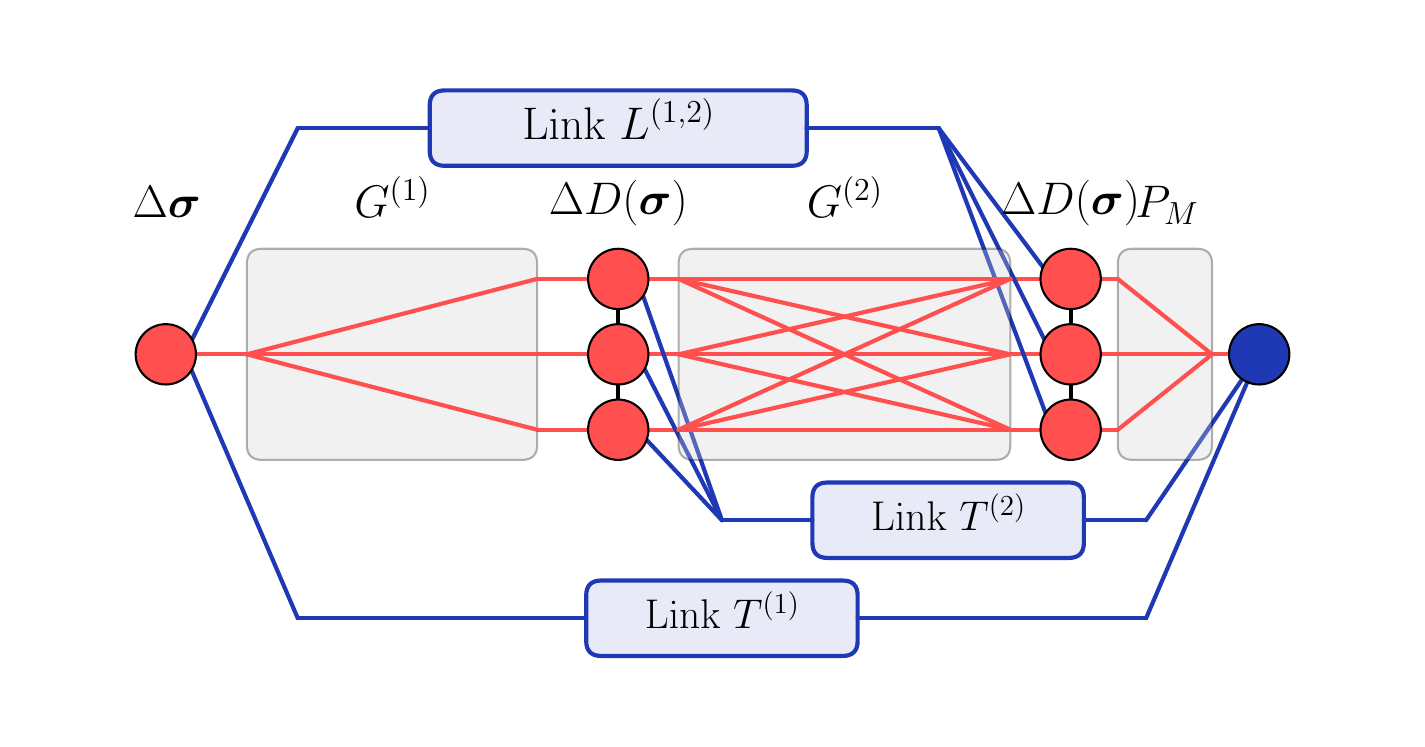}
      \put(0,45){ (b)}%
    \end{overpic}%
    \label{fig:ABACUS}}
  \hfill
  \subfloat[]{%
    \begin{overpic}[width=0.32\textwidth]{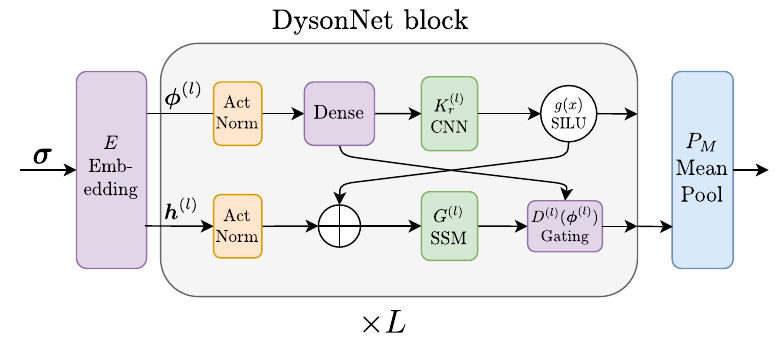}
      \put(-5,45){ (c)}%
    \end{overpic}%
    \label{fig:COBALTPractical}}
  \caption{%
(a) \textit{\cobalt\ Neural network architecture.} Layers alternate global Green’s-function convolutions \(G^{(l)}\) with local nonlinearity \(D^{(l)}(\boldsymbol \sigma)\); the readout is the mean-pooling projector \(P_M\).
A single spin flip \(\sigma_j\!\to\!-\sigma_j\) (red) modifies \(D^{(l)}\) only within a slice of width \(w\) around site \(j\). Subsequent convolutions \(G^{(l)}\) propagate this local change globally building long-range correlations.
(b) \textit{ABACUS local update (Algorithm~\ref{alg:local-update-short}).}
Slice-restricted activations are built recursively using the link tensors \(L^{(l,m)}\) and \(T^{(l)}\); see Eqs.~\eqref{eq:recur_step}–\eqref{eq:recur_out}.
(c) \textit{Practical realization of a \cobalt\ block.}
An embedded spin sequence splits into two streams: \(\boldsymbol\phi\) (CNN path; local, nonlinear) and \(\boldsymbol h\) (SSM path; nonlocal, kept linear in \(\boldsymbol h\)).
Each of the \(L\) stacked blocks updates \(\boldsymbol \phi\) with a dense layer, a small-kernel CNN layer, and a SiLU nonlinearity. In the $\boldsymbol h$ stream we add $\boldsymbol{\phi}$, apply the SSM kernel and apply multiplicative gating via $D(\boldsymbol \phi)=\mathrm{SiLu}(\boldsymbol \phi)$.
After \(L\) blocks, mean pooling over positions yields the readout.
Color/shape key: purple -- dense layers; green -- token mixers (CNN/SSM); orange -- normalization; blue -- projector (mean pooling); circle -- nonlinearity.
}
  \label{fig:triple}
\end{figure*}

This paper closes that gap. We present a constant-time local-update algorithm ABACUS for neural network architectures with linear token mixer and strictly local nonlinearity. The algorithm applies to many popular architectures such as convolutional neural networks~\cite{lecun_convolutional_1998}, linearized attention~\cite{roca-jerat_transformer_2024, rende_are_2025}, Fourier layers~\cite{li_fourier_2021,shah_fourier_2024}, state space models~\cite{gu_efficiently_2022, gu_mamba_2024} or graph neural networks~\cite{scarselli_graph_2009,  kochkov_learning_2021}. We prove that the change in the network output after a single spin flip can be computed in time \emph{independent of $N$} using precontracted \emph{link tensors}. We apply the algorithm to \cobalt, a broad class of convolutional NQS.
Each layer applies a convolution operator and then a local nonlinearity. \cobalt\ admits a particularly simple physical interpretation --- it is functionally analogous to a truncated Dyson series. Within \cobalt, local updates can be interpreted as scattering at static impurities giving ABACUS a physically intuitive interpretation as resumming the scattering series. Furthermore, \cobalt\ allows for efficient calculation of the link tensors via convolutions.

We implement \cobalt\ using the state space model (SSM) S4 ~\cite{gu_efficiently_2022} as a mixer which is particularly interpretable in terms of a physical Green's function. We benchmark our model on long-range spin chains, such as the long-range transverse field Ising model (TFIM). This yields up to two orders of magnitude wall-clock speedups over Vision-Transformers~\cite{roca-jerat_transformer_2024} at $N=1000$ at matched accuracy. We formalize the model class, give the algorithm and proof, describe an efficient implementation, and benchmark accuracy, runtime, and validate scaling to large system sizes.
Our work introduces the widely-applicable algorithm ABACUS and the physically interpretable architecture \cobalt\ that can significantly reduce the computational cost of NQS simulations.

\section{\texorpdfstring{\NoCaseChange{{\normalsize DysonNet}}: Physically motivated neural network architecture}{DysonNet: Physically motivated neural network architecture}}

A key feature of many quantum many-body ground states is a natural separation of scales:
long-wavelength behavior is often universal and can be described by a few collective modes, the basis for effective field theories and the renormalization group.
In contrast, short-range physics can be highly complex and non-universal, depending sensitively on microscopic details of the Hamiltonian. 

We will start by introducing \cobalt, a class of NQS Ansätze that explicitly builds the separation of correlation scales into their architecture. Each block factorizes the representation into (i) a linear, translation-invariant propagator $G$ that captures long-wavelength response and (ii) a local nonlinearity $D(\boldsymbol \sigma)$ that captures short-range structure. Stacking such blocks yields a representation that is both efficient and interpretable, rather than asking a generic ansatz to discover this structure from scratch.

\noindent\paragraph*{\textbf{Definition (\cobalt\  block).}}\label{def:COBALT}
\emph{Let $\boldsymbol \sigma\in \{\pm 1\}^N$ be a spin configuration, $\boldsymbol h^{(l)} \in \mathbb R^{N\times d}$ a sequence of latent vectors with $d$-channels and fix a half–window $w\in\mathbb{N}$. A single \emph{\cobalt}\  block is the map}
\begin{equation}
\;\boldsymbol h^{(l+1)}\;=\;D^{(l)}(\boldsymbol \sigma)\;G^{(l)}\,\boldsymbol h^{(l)}\; .\; \label{eq:COBALT_recurrence}
\end{equation}
\emph{It consists of}

\emph{(1) \textbf{Green's function convolution}, $G$ is a linear, translation–invariant operator along the chain that acts independently on each channel (i.e.~it is diagonal in the channel index $c$).
In real space it acts as a lattice Green’s–function  convolution,}
\begin{align}
(G^{(l)}\boldsymbol h)_{j}\;=\;\sum_{r\in\mathbb{Z}} g_{r,c}^{(l)}\;h_{j-r,c}\,,
\quad c=1,\dots,d.
\end{align}

\emph{(2) \textbf{Local nonlinearity} $D(\boldsymbol \sigma)$ is an input-conditioned matrix multiplication that acts position-wise on each embedding vector in the sequence}
\begin{align}
(D^{(l)}(\boldsymbol \sigma)\boldsymbol h)_{j,c}\;=\;D^{(l)}(\sigma_{j-w:j+w}) \;\boldsymbol h_{j}.
\end{align}
\emph{The receptive field of the local non-linearity is $W = 2w+1$ where $w$ is the half-width.}

\emph{Stacks are obtained by recursive application of \eqref{eq:COBALT_recurrence}.} \\ 

Here, $G$ denotes any translationally invariant kernel convolution, such as Fourier layers~\cite{li_fourier_2021,shah_fourier_2024}, linearized (Toeplitz) attention ~\cite{roca-jerat_transformer_2024}, state space models~\cite{gu_efficiently_2022, gu_mamba_2024}, or analytical Green's functions.
While limited by finite receptive fields, standard CNN kernels also apply~\cite{lecun_convolutional_1998}.
Depending on the padding scheme, the convolution can support either open or periodic boundary conditions.

The network is composed of a stack of \cobalt\  blocks followed by a pooling operation, (multiple) densely connected layers, and a nonlinearity.
This means the network output is 
\begin{align}
    \Psi(\boldsymbol \sigma) =f\bigg[ A  P_M\bigg(G^f\sum_{M=0}^L\prod_{l=1}^M D^{(l)} (\boldsymbol \sigma) G^{(l)} \bigg) E \boldsymbol \sigma \bigg] , \label{eq:wavefunc}
 \end{align}
\noindent where $E\in \mathbb R^{d\times 1}$ is the per-site embedding layer (applied independently at each position), $(P_M\boldsymbol h)_c = \frac 1 N \sum_j h_{j,c}$ is a mean pooling operation. The sum adds a residual connection from the latent activations of each layer to the output. $G^f$ is the final Green's function convolution. $f$ is the final nonlinearity often chosen to be $f(x)= \log\cosh(x)$. $A$ is the final matrix multiplication chosen to be real valued $A \in \mathbb R^{d\times d_{\mathrm{out}}}$ for stoquastic Hamiltonians or $A \in \mathbb C^{d\times d_{\mathrm{out}}}$ for complex wavefunctions. 

While the network definition \eqref{eq:wavefunc} seems quite complicated at first glance it admits a very simple interpretation.
For simplicity, let us assume weight sharing between the different layers $G^{(l)} = G_f := G_0$ and $D^{(l)}(\boldsymbol \sigma) = D(\boldsymbol \sigma)$~\cite{yamazaki_physics-inspired_2026}. We can then interpret the product of layers in Eq. \eqref{eq:wavefunc} (expression inside round bracket) as an effective propagator $G$.
Expanding this term yields
\begin{align}
\begin{split}
    G &= G_0\sum_{M=0}^L \prod_{l=1}^M  D(\boldsymbol \sigma)G_0 \\ &=G_0 + G_0 D(\boldsymbol \sigma) G_0 + G_0 D(\boldsymbol \sigma) G_0 D(\boldsymbol \sigma) G_0+\dots. \label{eq:Dyson}
\end{split}
\end{align}
We see the network can be interpreted as a truncated Dyson series. Specifically, we can identify $G_0$ with the free propagator, and $D(\boldsymbol \sigma)$ with the coupling vertex. This further justifies our decision to restrict $D(\boldsymbol \sigma)$ to be local. While this correspondence allows one to initialize the network using analytic theory, we learn the weight matrices via gradient descent.

The functional equivalence of the network to a Dyson series clarifies how its parameters control the correlations it can express. With $L$ layers, the network can represent any non-local power-law or exponentially decaying interaction up to coupling order $L$; long-ranged correlations are therefore truncated in coupling order (see appendix \ref{app:power-law}). We stress that by parameterizing $G$ as a power law, the network can naturally express correlations at criticality making it more expressive than a fixed bond dimension matrix product state. By contrast, the local nonlinearity can, in principle, generate correlations of arbitrarily high coupling order, but only within a finite width $W$. Varying the number of layers $L$ and the nonlinear receptive field $W$ thus provides a systematic way to improve the expressivity of the model.  Nevertheless, there exist ground states with genuinely non-local correlations of arbitrary coupling range, such as phases with intrinsic topological order characterized by non-local Wilson loop operators in 2D ~\cite{chen_local_2010,wen_colloquium_2017, Levin_String-net_2005}, for which these truncations are insufficient. Representing such exotic quantum states requires additional techniques.

Model training is performed via Variational Monte-Carlo and Stochastic Reconfiguration ~\cite{carleo_solving_2017}. We minimize the variational energy by sampling configurations $\boldsymbol \sigma$ and computing the local energy
\begin{equation}
E_{\mathrm{loc}}(\boldsymbol \sigma)
= \sum_{\boldsymbol \sigma'}
\langle \boldsymbol \sigma \vert H \vert \boldsymbol \sigma' \rangle
\frac{\Psi(\boldsymbol \sigma')}{\Psi(\boldsymbol \sigma)}. \label{eq:local-estimator}
\end{equation}
Since both the Metropolis sampling and the local energy calculation involve configurations $\boldsymbol \sigma'$ that differ from $\boldsymbol \sigma$ by only a single spin flip, recomputing the full wavefunction is inefficient. In the next section, we derive an optimized update algorithm to handle these local changes exploiting that the network is equivalent to a Dyson series.

\begin{algorithm}[H]
  \caption{\textsc{ABACUS} local-update kernel}
  \label{alg:local-update-short}
  \begin{algorithmic}[1]
    \Require Link tensors $\{T^{(l)}\}_{l=1}^L$, $\{L^{(l,m)}\}_{1 \le m \le l-2,\ 2\le l\le L}$
    \Ensure Mean-pooled vector $\boldsymbol\Omega$
    \State $\tilde{\boldsymbol h}^{(1)} \gets \Delta D^{(1)}\bigl(M^{(1)}\,\Delta\boldsymbol \sigma +\tilde{\boldsymbol h}^{(1)}_0\bigr)$
    \For{$l = 2,\dots,L$}
      \State $\tilde{\boldsymbol h}^{(l)} \gets \Delta D^{(l)}\big[ \tilde M^{(l)} \tilde{\boldsymbol h}^{(l-1)}
        + \sum_{m=1}^{l-2} L^{(l,m)} \tilde{\boldsymbol h}^{(m)}
        + \tilde{\boldsymbol h}^{(l)}_0\bigg]$
    \EndFor
    \State $\boldsymbol \Omega \gets A\!\left(P_M \tilde{\boldsymbol h}^{(L)}
      + \sum_{l=1}^{L} T^{(l)} \tilde{\boldsymbol h}^{(l)}\right)$
    \State \Return $\boldsymbol \Omega$
  \end{algorithmic}
\end{algorithm}

\section{ABACUS: Asymptotically optimal local updates}

We will now derive the update algorithm for the amplitude $\Psi(\boldsymbol \sigma)$ after a single spin flip at position $j$ relative to $\Psi(\boldsymbol \sigma_0)$. Importantly, the results of this section apply to any network with linear token mixers $M^{(l)}$, not only convolutional token mixers such as in \cobalt. For the \cobalt\ specialization, we identify $M^{(l)}=G^{(l)}$ and $M^f=G^f$. As the local vertex $D(\boldsymbol \sigma)$ only changes in a narrow slice of width $W=2w+1$ around the flip $j$ (with half-width $w$; see Fig.~\ref{fig:COBALT}), we can interpret the update as a scattering process off a localized impurity (the perturbation). This perspective allows us to derive an efficient update algorithm by exactly resumming the Dyson series defined by the network.

We formalize this by decomposing every layer’s nonlinearity $D^{(l)}(\boldsymbol\sigma)$ into a static background  $D_0^{(l)}:=D^{(l)}(\boldsymbol\sigma_0)$ and a localized perturbation, $\Delta D^{(l)} = D^{(l)}(\boldsymbol \sigma) - D^{(l)}(\boldsymbol \sigma_0)$ that is non-zero only on a narrow slice around $j$.  Substituting this decomposition into Eq.~\eqref{eq:wavefunc} yields 
\begin{equation}
\boldsymbol\Omega(\boldsymbol\sigma)
=
AP_M\!\left(M^f\sum_{M=0}^L\prod_{l=1}^{M} (D^{(l)}_0 + \Delta D^{(l)})\,M^{(l)}\right)E\boldsymbol\sigma,
\label{eq:fullprod}
\end{equation}
\noindent where $\boldsymbol\Omega(\boldsymbol \sigma)$ is the activation before the final nonlinearity $\Psi(\boldsymbol \sigma) = f(\boldsymbol \Omega(\boldsymbol \sigma))$. Intuitively, $D_0$ acts as the background medium and $\Delta D$ as a local scatterer; $M^{(l)}$ is the propagator spreading the local perturbations and the projector $P_M$ performs the measurement.

 We now need to resum the product in \eqref{eq:fullprod}. If we naively expand the ordered product in Eq.~\eqref{eq:fullprod} we formally obtain exponentially many \emph{matrix words} $\mathcal O(2^L)$ in the network depth $L$. This would be prohibitive for deeper networks. To circumvent this we propose ABACUS (Algorithm~\ref{alg:local-update-short}) for computing matrix-vector products like \eqref{eq:fullprod} in quadratic time $\mathcal O(L^2)$ in network depth. The algorithm works by
 (i) \emph{pre-contracting} all background propagators $D_0^{(\cdot)}M^{(\cdot)}$ into \emph{environment (link) tensors}, and (ii) \emph{dynamically resumming} the matrix words with a slice-local recurrence. 

The link tensors defined for a spin configuration $\boldsymbol \sigma_0$ propagate activations through the frozen background 
\begin{align}
T^{(l)} &= P_MM^f\left(\sum_{M=l}^L\ \prod_{n=l+1}^{M} D_0^{(n)}M^{(n)}\right)P_{S,j}^{\top},
\label{eq:Tl} \\
L^{(l,m)} &= P_{S,j}M^{(l)}\left(\prod_{n=m+1}^{l-1} D_0^{(n)}M^{(n)}\right)P_{S,j}^{\top},
 \ , 
\label{eq:Lmn}
\end{align}

\noindent with $1\le m\le l-2$. Note, the projector $P_{S,j}$ restricts the activation on a narrow slice $[j-w,\dots j+w]$ around the update $j$. The link tensor $T^{(l)}:\mathbb R^{W\times d}\to \mathbb R^d$  \eqref{eq:Tl} lifts slice-restricted activations back to full size and propagates them to the output.  $L^{(l,m)} : \mathbb R^{W\times d}\to \mathbb R^{W\times d}$ \eqref{eq:Lmn} propagates slice-restricted activations through the background between layers $m$ and $l$. Crucially, the link tensors require only linear memory in $N$. For the special case of \cobalt\ they can be constructed in log-linear time $\mathcal O(N\log N)$ (see appendix \ref{app:linkTensorConstruction}) for both periodic or open boundary conditions. Physically, the link tensors represent fully dressed Green's function propagators relative to some background. 

Using the link tensors \eqref{eq:Tl} and \eqref{eq:Lmn}, an update to the wave function amplitude $\boldsymbol \Omega(\boldsymbol \sigma)$ can be computed by summing the scattering events. Let us define the slice-restricted activations $\boldsymbol {\tilde h}^{(l)} = P_{S,j} \boldsymbol h^{(l)} \in \mathbb R^{W\times d}$ and $\boldsymbol {\tilde h}_0^{(l)} = P_{S,j} \boldsymbol h_0^{(l)} \in \mathbb R^{W\times d}$. Furthermore, we define the slice-restricted convolution $\tilde M^{(l)} = P_{S,j} M^{(l)} P_{S,j}^{\top} \in \mathbb R^{W\times W}$. The slice-restricted activations can be computed using the following recurrence relation 

\begin{align}
\tilde{\boldsymbol h}^{(l)} 
&= 
\Delta D^{(l)}\left(\tilde{\boldsymbol h}^{(l)}_0+
\tilde M^{(l)}\tilde{\boldsymbol h}^{(l-1)}
+ \sum_{m=1}^{l-2} L^{(l,m)}\tilde{\boldsymbol h}^{(m)}
\right),
\label{eq:recur_step} \\ 
\boldsymbol\Omega
&=
A\!\left(
P_M\,\tilde{\boldsymbol h}^{(L)}
+
\sum_{l=1}^{L} T^{(l)}\,\tilde{\boldsymbol h}^{(l)}
\right).
\label{eq:recur_out}
\end{align}

\noindent This recurrence defines the ABACUS algorithm. Physically, $\tilde{\boldsymbol h}^{(l)}$ represents the sum of all paths scattered at layer $l$. This amplitude sums three components: the unscattered incident wave $\boldsymbol{\tilde h}_0^{(l)}$; all paths $\boldsymbol {\tilde h}^{(l-1)}$ scattered at the previous layer $l-1$; and the cumulative contributions from all prior defects $m$, propagated through the background.

In the appendix \ref{sec:induction_proof} we prove the following statement 

\begin{theorem}[ABACUS exactness and complexity]
\label{thm:abacus}
Given a linear token mixer $M^{(l)}$ and a local non-linearity $D^{(l)}(\boldsymbol \sigma)$, the recurrence
\eqref{eq:recur_step}–\eqref{eq:recur_out} evaluates
$\boldsymbol \Omega(\boldsymbol \sigma)$ exactly for any local update $\boldsymbol \sigma=\boldsymbol \sigma_0+\Delta\boldsymbol \sigma$.
Given precomputed links of Eq.~\eqref{eq:Tl} and \eqref{eq:Lmn}, the runtime $\mathcal{O}(L^2W^2 d^2)$ is constant in $N$
 and the memory  
$\mathcal{O}(NL^2W d^2)$ is linear in $N$.
\end{theorem}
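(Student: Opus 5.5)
The proof is an induction over layers on the \emph{difference} between the perturbed and background activations, followed by an exchange of summation order at the readout. Write $\boldsymbol h^{(0)}=E\boldsymbol\sigma$ and $\boldsymbol h^{(l)}=D^{(l)}(\boldsymbol\sigma)M^{(l)}\boldsymbol h^{(l-1)}$. Define $\boldsymbol h_0^{(l)}$ likewise with $\boldsymbol\sigma_0$, and set $\boldsymbol\delta^{(l)}=\boldsymbol h^{(l)}-\boldsymbol h_0^{(l)}$. Then Eq.~\eqref{eq:fullprod} reads $\boldsymbol\Omega=AP_MM^f\sum_{M=0}^L\boldsymbol h^{(M)}$.

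Substituting $D^{(l)}=D_0^{(l)}+\Delta D^{(l)}$ gives the one-step identity
\begin{equation*}
\boldsymbol\delta^{(l)}=D_0^{(l)}M^{(l)}\boldsymbol\delta^{(l-1)}+\boldsymbol s^{(l)},\qquad \boldsymbol s^{(l)}:=\Delta D^{(l)}M^{(l)}\boldsymbol h^{(l-1)} .
\end{equation*}
Since $\Delta D^{(l)}$ vanishes outside $[j-w,j+w]$, each source satisfies $\boldsymbol s^{(l)}=P_{S,j}^{\top}P_{S,j}\boldsymbol s^{(l)}$. Likewise $\boldsymbol\delta^{(0)}=E\Delta\boldsymbol\sigma$ is supported at site $j$. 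Unrolling the recursion gives
\begin{equation*}
\boldsymbol\delta^{(l)}=\sum_{m=1}^{l}\Bigl(\prod_{n=m+1}^{l}D_0^{(n)}M^{(n)}\Bigr)P_{S,j}^{\top}\tilde{\boldsymbol s}^{(m)}+(\text{embedding term}),
\end{equation*}
with $\tilde{\boldsymbol s}^{(m)}=P_{S,j}\boldsymbol s^{(m)}$. The embedding term is the same product applied to $\boldsymbol\delta^{(0)}$ and is handled identically.

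The central claim, proved by strong induction on $l$, is that $\tilde{\boldsymbol h}^{(l)}$ produced by Eq.~\eqref{eq:recur_step} equals $\tilde{\boldsymbol s}^{(l)}$. This requires reading $\tilde{\boldsymbol h}^{(l)}_0$ as the slice-restricted background pre-activation $P_{S,j}M^{(l)}\boldsymbol h_0^{(l-1)}$, and interpreting the layer-$1$ step with $M^{(1)}\Delta\boldsymbol\sigma$ as absorbing the embedding change. Fixing this identification of conventions is the first thing I would do.

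Because $\Delta D^{(l)}$ acts only on the slice, $\tilde{\boldsymbol s}^{(l)}=\Delta D^{(l)}P_{S,j}M^{(l)}(\boldsymbol h_0^{(l-1)}+\boldsymbol\delta^{(l-1)})$. Inserting the unrolled $\boldsymbol\delta^{(l-1)}$, I split the sum into two parts:
\begin{itemize}
\item the $m=l-1$ term, which gives $P_{S,j}M^{(l)}P_{S,j}^{\top}\tilde{\boldsymbol s}^{(l-1)}=\tilde M^{(l)}\tilde{\boldsymbol s}^{(l-1)}$;
\item the $m\le l-2$ terms, which give exactly $L^{(l,m)}\tilde{\boldsymbol s}^{(m)}$ by the definition in Eq.~\eqref{eq:Lmn}.
\end{itemize}
Combined with the inductive hypothesis, this reproduces Eq.~\eqref{eq:recur_step}.

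For the readout, I write $\boldsymbol\Omega(\boldsymbol\sigma)-\boldsymbol\Omega(\boldsymbol\sigma_0)=AP_MM^f\sum_{M}\boldsymbol\delta^{(M)}$ and substitute the unrolled form. I then swap $\sum_{M}\sum_{l\le M}$ into $\sum_l\sum_{M\ge l}$. The inner sum is exactly $T^{(l)}$ of Eq.~\eqref{eq:Tl}, which yields Eq.~\eqref{eq:recur_out} up to the additive cached background $\boldsymbol\Omega(\boldsymbol\sigma_0)$ and the boundary term $P_M\tilde{\boldsymbol h}^{(L)}$. I would check this boundary term carefully against the empty-product ($M=l$) convention in Eq.~\eqref{eq:Tl}.

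For complexity, the recurrence contains $\mathcal O(L^2)$ applications of $L^{(l,m)}$ or $\tilde M^{(l)}$, each a map $\mathbb R^{W\times d}\to\mathbb R^{W\times d}$ costing $\mathcal O(W^2d^2)$. There are also $L$ applications of $\Delta D^{(l)}$, each $\mathcal O(Wd^2)$, and $L$ applications of $T^{(l)}$, each $\mathcal O(Wd^2)$. The total is $\mathcal O(L^2W^2d^2)$, independent of $N$. For memory, a flip at any of the $N$ sites $j$ requires its own set of $\mathcal O(L^2)$ links. Their sizes are as follows:
\begin{itemize}
\item $L^{(l,m)}$ is a channel-mixing slice operator; for a slice-to-slice map with diagonal-in-channel $M$ sandwiched by local $D_0$ it needs $\mathcal O(W\,d^2)$ entries per slice after exploiting structure, otherwise $\mathcal O(W^2d^2)$.
\item $T^{(l)}$ needs $\mathcal O(Wd^2)$.
\end{itemize}
This gives $\mathcal O(NL^2Wd^2)$, linear in $N$.

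I expect the main obstacle to be the bookkeeping rather than any analytic estimate. The steps that need care are matching the precise conventions in Eqs.~\eqref{eq:Tl}, \eqref{eq:Lmn}, and \eqref{eq:recur_step}–\eqref{eq:recur_out}: what $\tilde{\boldsymbol h}_0^{(l)}$ denotes, how the embedding change enters layer $1$, the empty products, and the residual/readout term. I would settle these by first verifying the $L=1$ and $L=2$ cases explicitly against the expansion of Eq.~\eqref{eq:fullprod}, and then running the induction.
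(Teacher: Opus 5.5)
Your $\boldsymbol\delta$-recursion is the paper's proof in different notation. $\boldsymbol s^{(l)}=\Delta D^{(l)}M^{(l)}\boldsymbol h^{(l-1)}$ is exactly the sum of all matrix words with pivot $l$. Unrolling $\boldsymbol\delta^{(l-1)}$ is the paper's split into an unscattered incident field plus lower-pivot words dressed by $D_0^{(n)}M^{(n)}$. Your swap $\sum_M\sum_{l\le M}=\sum_l\sum_{M\ge l}$ is its grouping by pivot against $T^{(l)}$. Your reading $\tilde{\boldsymbol h}^{(l)}_0=P_{S,j}M^{(l)}\boldsymbol h_0^{(l-1)}$ is also the one consistent with the base case.

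The step that does not close is ``combined with the inductive hypothesis, this reproduces Eq.~\eqref{eq:recur_step}''. Your own unrolling leaves, inside $P_{S,j}M^{(l)}\boldsymbol\delta^{(l-1)}$, the term $P_{S,j}M^{(l)}\bigl(\prod_{n=1}^{l-1}D_0^{(n)}M^{(n)}\bigr)E\Delta\boldsymbol\sigma$. This is generically nonzero for $l\ge 2$ (e.g.\ $P_{S,j}M^{(2)}D_0^{(1)}M^{(1)}E\Delta\boldsymbol\sigma$), and the layer-$1$ initialization absorbs it only at $l=1$. Likewise the readout acquires $AP_MM^f\sum_{M=0}^{L}\bigl(\prod_{n=1}^{M}D_0^{(n)}M^{(n)}\bigr)E\Delta\boldsymbol\sigma$. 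Conversely, no separate $P_M\tilde{\boldsymbol h}^{(L)}$ term ever arises. The only route from $\tilde{\boldsymbol h}^{(L)}$ to the output is the empty-product $M=l=L$ piece already inside $T^{(L)}$, so adding it again double counts.

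``Handled identically'' therefore has to be made explicit:
\begin{itemize}
\item set $\tilde{\boldsymbol h}^{(0)}:=P_{S,j}E\Delta\boldsymbol\sigma$;
\item let $m$ run from $0$ in \eqref{eq:recur_step}, using $L^{(l,0)}$ from Eq.~\eqref{eq:Lmn}, with $\tilde M^{(1)}\tilde{\boldsymbol h}^{(0)}$ reproducing the $M^{(1)}\Delta\boldsymbol\sigma$ of the initialization;
\item prove $\boldsymbol\Omega(\boldsymbol\sigma)=\boldsymbol\Omega(\boldsymbol\sigma_0)+A\sum_{l=0}^{L}T^{(l)}\tilde{\boldsymbol h}^{(l)}$, with $T^{(0)}$ given by Eq.~\eqref{eq:Tl} at $l=0$.
\end{itemize}
That identity is what your argument actually establishes, and it is the incremental form the sampler uses ($\boldsymbol\Omega\gets\boldsymbol\Omega+\Delta\boldsymbol\Omega_k$). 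It leaves the complexity unchanged. The paper's own proof elides the same point, so state the corrected identity rather than deferring it to a convention check.

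Second, your memory count does not justify the factor $W$. A single slice block $L^{(l,m)}$ with $m\le l-2$ has at least one $D_0^{(n)}$ between two mixers, and it has $W^2d^2$ independent entries. Its $(a,b)$ block, e.g.\ $\sum_k M^{(l)}_{ak}D^{(l-1)}_{0,k}M^{(l-1)}_{kb}$, depends on $a$ and $b$ separately, because $D_0$ is site dependent and channel mixing. No per-slice ``structure'' reduces this to $Wd^2$, so storing one block per flip site costs $\mathcal O(NL^2W^2d^2)$.

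The stated $\mathcal O(NL^2Wd^2)$ comes instead from overlap between slices. For every $j$, $L^{(l,m)}$ is a sub-block of the single global operator $M^{(l)}\prod_{n=m+1}^{l-1}D_0^{(n)}M^{(n)}$ restricted to the band $|a-b|\le 2w$. That band contains $\mathcal O(NW)$ blocks of size $d\times d$, and storing it once per $(l,m)$ suffices. The paper leaves this implicit in its $(j,\Delta j)$ indexing of the links. $T^{(l)}$ is a column restriction of one $d\times Nd$ map and is unproblematic.
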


We note ABACUS applies to any network architecture with a linear token mixer and local nonlinearity, including linearized self-attention and graph neural networks, and is not limited to convolutions. In Appendix~\ref{app:ABACUS2D} we describe ABACUS in 2D beyond the case considered here. The runtime complexity remains unchanged and depends on the number of tokens $W$ conditioning the nonlinearity. Note, however, that the structure of $M$ dictates the link-tensor precomputation cost (Eqs.~\eqref{eq:Tl} and \eqref{eq:Lmn}): while $\mathcal O(N \log N)$ for the \cobalt\ class $M^{(l)} = G^{(l)}$, it can increase to $\mathcal O(N^2)$ for dense, unstructured mixers. We discuss this in detail in the runtime section.

The constant-time complexity $O(1)$ in Theorem~\ref{thm:abacus} strictly holds when the hyperparameters $w$ and $L$ are independent of system size $N$. Physically, this assumption relies on a separation of scales: complex, non-universal correlations are confined to a short range $w$, while long-wavelength behaviors are captured by the propagator $M^{(l)}$ ($G^{(l)}$).

A common intuition from perturbation theory is that reproducing algebraic decay (criticality) from a short-range bare propagator requires resumming an infinite Dyson series ($L \to \infty$). However, in \cobalt, the propagators $G^{(l)}$ are learnable, translation-invariant kernels capable of representing power-law decays directly. Consequently, $G^{(l)}$ should be interpreted not as a bare propagator, but as an effective dressed propagator. We empirically observe that even at criticality, small $L$ is sufficient to preserve the $O(1)$ update cost.

The algorithm extends naturally to spin clusters of size $n$ by increasing the effective slice width to $w_{\mathrm{cluster}} = w + \lceil n/2\rceil$. Regarding complexity, while the update step is size-independent, the initial link-tensor construction incurs a quasi-linear $\mathcal{O}(N\log N)$ cost. For estimators requiring $\mathcal{O}(N)$ evaluations, this overhead amortizes to $\mathcal{O}(\log N)$ per sample. In the next section, we outline a strategy to amortize the link tensor costs for the sampler.

\section{Screened typewriter sampler}
\label{sec:sampler}

\begin{algorithm}[H]
  \caption{\textsc{ScreenedTypewriterSweep}($\boldsymbol \sigma, s, \epsilon$)}
  \label{alg:screened-typewriter}
  \begin{algorithmic}[1]
    \Require Configuration $\boldsymbol \sigma_0$, spacing $s$, tolerance $\epsilon$
    \Ensure Updated configuration $\boldsymbol \sigma$
    \Comment{Latents $\boldsymbol{\Omega}$ \& Tensors $\mathcal{T}$}
    \State Select spin flips $\Delta \sigma_1, \dots \Delta \sigma_K$ with spacing $s$
    \State Sample random numbers $u_{1\dots K} \sim \mathcal{U}[0, 1]$
    \State $\boldsymbol{\Omega}, \mathcal{T} \gets \textsc{LinkTensor}(\boldsymbol \sigma)$ 
    \State $\boldsymbol \sigma \gets \boldsymbol \sigma_0$
    \State $\{\Delta \boldsymbol{\Omega}_k\} \gets \mathrm{ABACUS}(\{\Delta \sigma_k\}, \boldsymbol \sigma_0, \mathcal{T})$ \Comment{Get update vectors}
    \For{$k=1 \dots K$}
      \State $\tilde{R} \gets |f(\boldsymbol{\Omega} + \Delta \boldsymbol{\Omega}_k) / f(\boldsymbol{\Omega})|^2$ \Comment{Fast activation ratio}
      \State $P_{\text{acc}} \gets \min(1, \tilde{R})$
      
      \If{$|u_k - P_{\text{acc}}| \le \epsilon$} \Comment{\textbf{Ambiguous:} Refresh}
        \State $\boldsymbol{\Omega}, \mathcal{T} \gets \textsc{LinkTensor}(\boldsymbol \sigma)$ 
        \State $\{\Delta \boldsymbol{\Omega}_k\} \gets \mathrm{ABACUS}(\{\Delta \sigma_k\}, \boldsymbol \sigma_0, \mathcal{T})$ 
        \State $\tilde{R} \gets |f(\boldsymbol{\Omega} + \Delta \boldsymbol{\Omega}_k) / f(\boldsymbol{\Omega})|^2$ 
        \State $P_{\text{acc}} \gets \min(1, \tilde{R})$
      \EndIf

        \If{$u_k < P_{\text{acc}}$} 
            \State $\boldsymbol \sigma \gets \boldsymbol \sigma+ \Delta \sigma_k$
            \State $\boldsymbol{\Omega} \gets \boldsymbol{\Omega} + \Delta \boldsymbol{\Omega}_k$ \Comment{Update latents}
        \EndIf

    \EndFor
    \State \Return $\boldsymbol \sigma$
  \end{algorithmic}
\end{algorithm}

We propose a Metropolis sampler that addresses two bottlenecks: standard Metropolis updates are inherently sequential and thus underuse GPUs, and our link-tensor construction must be amortized to achieve sublinear cost in system size. The key idea is to treat a batch of spatially separated spin flips as a dilute gas of defects. When the defects are well separated, their scattering signatures do not interfere. We can then evaluate the flips as independent proposals in parallel and reuse the same background link tensors across the batch solving both problems. This approximation introduces a small error in the acceptance ratio, but a screened acceptance rule restores exact detailed balance \cite{semon_lazy_2014}.

Proposals are single-site flips, with occasional global spin inversions to aid mixing. A proposal $\boldsymbol \sigma_0\to\boldsymbol \sigma$ is accepted with $P_{\rm acc}=\min\{1,R\}$, where $R=|\Psi(\boldsymbol \sigma)|^2/|\Psi(\boldsymbol \sigma_0)|^2$. To amortize the $\mathcal{O}(N\log N)$ cost of constructing the background link tensors, we evaluate a batch of $F$ single-site proposals in parallel using the static background of the reference $\boldsymbol \sigma_0$. This relies on the \emph{Independent Scattering Approximation}: we assume the defects interact with the background but not with each other. Note we are using the term in analogy to scattering theory in disordered media \cite{norris_multiple_2011, mishchenko_independent_2018}, although here the defects are not physical. Detailed sampler settings are listed in Appendix Sec.~\ref{app:training_details}.  
The error induced by neglecting the interaction (multiple scattering) between simultaneous flips is
\begin{align}
\epsilon(\boldsymbol \sigma)
:= \left|\frac{\Psi_{\mathrm{ISA}}(\boldsymbol \sigma)-\Psi(\boldsymbol \sigma)}{\Psi(\boldsymbol \sigma_0)}\right|.
\end{align}
This error is controlled by the decay of the Green's function, see appendix \ref{app:metropolis-error}. By enforcing a minimum separation $s$ between defects, we ensure the propagator connecting any two flips is negligible. 

To enforce a minimum spacing between flips, we use a typewriter schedule \cite{bohm_understanding_2018,hasenbusch_anisotropic_2011} (see algorithm \ref{alg:screened-typewriter}). We partition the lattice into blocks of width $s\geq w$. We sweep first over even blocks, then odd blocks. In each active block, we select one site uniformly. This creates a superlattice of defects separated by the spacing $s$. We build the link tensors once per sweep (costing $\mathcal{O}(N\log N)$) and reuse them for all proposals in the batch. Because the proposals are spatially separated, the independent scattering approximation is highly accurate and the error $\boldsymbol \epsilon$ can be bounded (see appendix \ref{app:metropolis-error}).

Because \(\epsilon\) is bounded, the Metropolis acceptance rule can be enforced exactly even though the wave-function amplitude is approximate. We recover exact detailed balance using a screened acceptance rule \cite{semon_lazy_2014}. We accept a proposal based on the approximate ratio $\tilde R = |\Psi_{\mathrm{ISA}}(\boldsymbol \sigma)|^2 /|\Psi(\boldsymbol \sigma_0)|^2$ and error bound $\epsilon$.
For a random number $u \sim \mathcal{U}[0,1]$
\begin{enumerate}
    \item Fast Accept: If $u < \tilde R - \epsilon$, accept immediately.

    \item Fast Reject:  If $u > \tilde R + \epsilon$, reject immediately.

    \item Correction: Only in the rare window $u \in [\tilde R - \epsilon, \tilde R + \epsilon]$ do we evaluate the exact interference terms (recomputing the full amplitude).
\end{enumerate}
\noindent This scheme allows parallel evaluation without statistical bias; proof see appendix \ref{app:screenedProof}.

The efficiency depends on the physical correlation length of the system.
The decay of correlations determines the required spacing $s$ to keep $\epsilon$ small.
In appendix \ref{app:sampling-complexity}, we derive the amortized per proposal cost
\begin{align}
\mathcal T(N)=
\begin{cases}
\mathcal{O}(\log^2 N) & \text{Area-law (Gapped)},\\
\mathcal{O}(N^{\gamma}\log N) & \text{Power-law (Critical)},\\
\mathcal{O}(N\log N) & \text{All-to-all}.
\end{cases}. \label{eq:TN}
\end{align}
For gapped phases (area-law), the Green's function decays exponentially, allowing dense packing of updates and polylogarithmic runtime. For critical systems (power-law), the spacing $s$ must grow with system size, leading to sublinear scaling $N^\gamma$ (where $\gamma < 1$ depends on the critical exponent $\alpha$). Only in the all-to-all limit does the speedup vanish. In our numerics, we find the empirical error $\epsilon(\boldsymbol \sigma, \boldsymbol \sigma_0)$ to be small up to very large system sizes $N\leq 1000$ and even on criticality. For practical system sizes, adjusting the spacing $s$ is unnecessary. This means the method yields a speedups of $\mathcal{O}(N)$ significantly exceeding the worst case analytical estimates (see appendix \ref{app:throughput}). Another significant practical advantage is that the screened acceptance rule enables reduced precision sampling. While neural quantum states typically require high floating-point precision, this rule permits lower-precision formats, which can yield additional constant speedups of $2\times$ to $4\times$.

Table \ref{tab:complexity_single_col} contrasts the runtime complexity of \cobalt\  against standard NQS architectures. Among deep networks, our approach uniquely enables constant-time, $\mathcal O(1)$ spin-flip updates given cached environment (link) tensors. In comparison, Restricted Boltzmann Machines (RBM) scale linearly as $\mathcal O(N)$. Autoregressive (AR) models also face costs of $\mathcal O(N)$ per local update. While AR models can be sampled efficiently, the flip costs induce $\mathcal O(N^2)$ work for local observables which is their most significant bottleneck. While matrix product states (MPS) and local CNNs also achieve $\mathcal O(1)$ updates, they occupy a restricted expressivity class and fail to efficiently represent long-range power-law correlations.

The fact that local updates can be evaluated in constant time gives a distinctive advantage to \cobalt\ +ABACUS when evaluating the Hamiltonian \eqref{eq:local-estimator}. Local Hamiltonians generally have $\mathcal O(N)$ connected matrix elements, requiring an equivalent number of network evaluations. RBMs, RNNs and PixelCNNs face a cost of $\mathcal O(N)$ for local updates resulting in a cost of $\mathcal O(N^2)$ in total. The vision transformer (ViT) network evaluation scales as $\mathcal O(N^2)$ leading to a cubic cost $\mathcal O(N^3)$.  \cobalt\  improves on all prior approaches by constructing link tensors in $\mathcal O(N\log N)$ and using the ABACUS algorithm for updates in $\mathcal O(1)$ given cached link tensors, achieving a log-linear cost $\mathcal O(N \log N)$.

Metropolis-Hastings approaches generally suffer from sampling costs of $\mathcal{O}(N\tau C_{\text{network}})$, leading to scaling of $\mathcal{O}(N^3 \tau)$ for ViTs and $\mathcal{O}(N^2 \tau)$ for RBMs. \cobalt+ABACUS improves upon these by ensuring the per-sweep cost is strictly sub-quadratic, $\mathcal{O}(\mathcal T(N) N \tau)$, and even log-linear, $\mathcal{O}(N \log^2 N)$, in area-law phases. Autoregressive architectures, however, retain the fastest asymptotic sampling capability with a cost of $\mathcal{O}(N)$.

When considering total training time, Metropolis approaches remain dominated by the sampling cost, meaning our algorithmic improvements yield asymptotic speedups independent of ground state correlations. Autoregressive models, by contrast, are computationally dominated by the $\mathcal{O}(N^2)$ overhead required to evaluate the local Hamiltonian. As a result, in area-law phases where $\tau \propto \mathcal{O}(1)$, \cobalt+ABACUS achieves a total training complexity of $\mathcal{O}(N \log^2 N)$, which is lower than that of autoregressive approaches. Nevertheless, for systems with divergent autocorrelation time autoregressive approaches remain asymptotically faster. While slower, using Markov chain sampling has the advantage of  efficient enforcement of wavefunction symmetries that are difficult to implement in  autoregressive architectures. 

We noted in the algorithm section that ABACUS applies to any NQS with global linear layers. For dense, unstructured linear mixers, a forward pass typically costs $\mathcal{O}(N^2)$, and constructing/caching the link tensors has the same scaling, $\mathcal{O}(N^2)$, rather than $\mathcal{O}(N\log N)$ for \cobalt. In this regime, ABACUS reduces the cost of a Metropolis sweep of local estimators (i.e., $N$ single-site proposals) from $\mathcal{O}(N C_{\mathrm{fwd}})=\mathcal{O}(N^3)$ to $\mathcal{O}(C_{\mathrm{fwd}}+N)=\mathcal{O}(N^2)$ (for fixed depth and locality window). The corresponding sampling speedups depend on the applicability of the independent-scattering approximation; in this work we only provide rigorous bounds for \cobalt, and we leave a systematic study of sampling speedups for dense mixers to future work. 

Finally, we compare asymptotic peak memory usage. For standard autoregressive and Markov chain Monte Carlo architectures, the local estimator introduces a significant bottleneck; evaluating the wavefunction on $\mathcal{O}(N)$ connected configurations typically incurs an $\mathcal{O}(N^2)$ memory cost. \cobalt\ avoids this quadratic scaling. Because local updates require only slice-local activations and the link tensors scale linearly, the total memory footprint is reduced to $\mathcal{O}(N)$. This scaling behavior allows for the simulation of large systems on commodity GPUs that lack the high memory capacity of data-center hardware.

Beyond large system asymptotics, network hyperparameters determine when ABACUS outperforms \cobalt. We illustrate this using the local estimator. With $\mathcal O(N)$ forward passes, the cost of computing the local estimator under \cobalt\  is $\mathcal O( NL (d N\log N +d^2 N))$. For ABACUS, the dominant cost is computing the $(L-1)^2$ interlayer link tensor, which scales as $\mathcal O(N\log N d^2 W)$. Applying the ABACUS recurrence costs $\mathcal O(L d^2 W^2)$. Thus, ABACUS is generally more efficient when.

 \begin{align}
     N \gtrsim C_{\text{cache}}  W d (L-1) + C_{\text{ABACUS}} W^2.
 \end{align}

 \noindent where $C$ is an implementation-dependent constant; for our code, $C\approx 1/6$ (see Fig. \ref{fig:kernel_runtime}). Efficiency depends quadratically on the non-linearity’s receptive field $W$, and linearly on both the embedding dimension $d$ and layer count $L$. Patching can minimize $W$ by allowing multiple spins to contribute to a single embedding vector. Although this increases the embedding vector size, the trade-off is favorable. While very deep networks are inefficient with ABACUS, they are unnecessary with \cobalt\ because FFT-based convolutions provide a global receptive field. Very wide networks also reduce efficiency. For these, we recommend giving $D(\boldsymbol \sigma)$ a block-diagonal structure in channel space creating a mixture-of-experts like structure. Dividing into blocks of fixed size $\Delta d$ implies efficiency when $N  \gtrsim C W^2 \Delta d (L-1)$ making ABACUS scalable to arbitrary wide networks.

\section{\texorpdfstring{Implementation of \NoCaseChange{{\normalsize DysonNet}}}{Implementation of DysonNet}}

So far we have treated the free propagator $G_0$ and the coupling vertex $D(\boldsymbol \sigma)$ as abstract objects, since our algorithms only require their general properties (linearity for $G_0$, locality for $D$). The reader is invited to instantiate $G_0$ and $D$ with their favorite sequence model, such as CNNs ~\cite{lecun_convolutional_1998}, Fourier layers ~\cite{li_fourier_2021, shah_fourier_2024}, or certain forms of linearized self-attention ~\cite{roca-jerat_transformer_2024}. Here, we will parametrize $G_0$ as a state space model and $D(\boldsymbol \sigma)$ as a short-range CNN (see Fig \ref{fig:COBALTPractical}).

\emph{Parameterizing the Green's function $G^{(l)}$.}
State space models are a natural choice as they allow to interpret $G^{(l)}$ as a free propagator. This is consistent with viewing the network as 
 a truncated Dyson series. A state space model is a sequence-to-sequence model where the output is obtained by computing the response of a linear state-evolution equation to the input. We implement a bidirectional state-space model mapping a sequence of $\boldsymbol h_j \to \boldsymbol {\tilde h}_j$ by evolving the state vector $\boldsymbol x_j\in\mathbb C^{s\times d}$ via the following discretized differential equation
\begin{align}
x_{j,c} \;&=\; A_c\,(x_{j+1,c}+x_{j-1,c} - x_{j,c} ) + B_c\,h_{j,c},\qquad \\ \tilde h_{j,c} \;&=\; C_c\,x_j , \label{eq:ssm_discretized}
\end{align}
where  $A_c\in \mathbb C^{s\times s},  B_c \in \mathbb C^{1\times s}, C_c\in \mathbb C^{s\times 1}$ are the learned state matrices. The first term in \eqref{eq:ssm_discretized} is the lattice Laplace operator spreading an excitation. As the system is translationally invariant, the response can be calculated using the Fast Fourier transform $\boldsymbol {\tilde h} = \mathcal F^{-1}\!\bigl(\, G(k)\;\odot\;\mathcal F(\boldsymbol h)\,\bigr)$ and the Green's function 
\begin{align}
G_c(k) =  C_c\,(I - e^{i2\pi k/N}A_c)^{-1} B_c \; \label{eq:resolvent}  .
\end{align}
\noindent When $A_c=\mathrm{diag}(\lambda_1,\dots,\lambda_s)$ is diagonal, the resolvent and
$G(k)$ reduce to a sum of Cauchy kernels (simple poles)

\begin{align}
 \; G_c(k) \;=\; \sum_{j=1}^{s} \frac{C_{c,j} B_{c,j}}{\,1 - e^{i2\pi k/N}\lambda_{c,j}\,}.
\end{align}
 Physically, these are damped/oscillatory
modes with poles $\lambda_{c,j}$ inside the unit disk ($|\lambda_{c,j}|\leq1$) on the complex plane. The Green's function corresponds to the free propagation of an excitation (for example domain wall or magnon mode) in space. Note as the number of modes grows, $G_c(k)$ can also approximate power laws over finite range. 

In our paper we use the S4 Ansatz for $A= \Lambda + \boldsymbol p \boldsymbol q^T$  consisting of a diagonal matrix $\Lambda \in \mathbb C^{s\times s}$ and a rank one perturbation $\boldsymbol p, \boldsymbol{q} \in \mathbb C^{1\times s}$~\cite{gu_efficiently_2022}. We provide the analytical Green's function $G(k)$ in appendix \ref{app:cobalt-details}.  Making this choice is physically analogous to considering the interacting Green's function of an excitation scattered at a static impurity.

\emph{Parameterizing the local nonlinearity $D(\boldsymbol \sigma)$.} For parameterizing the local vertex $D(\boldsymbol \sigma)$ we use a short-ranged CNN kernel, as this naturally enforces locality via a finite receptive field. From the input configuration we construct a second data stream $\boldsymbol{\phi}^{(l)}$  (see Fig. \ref{fig:COBALTPractical}). Each layer applies (i) a site-wise dense map, (ii) a short-range convolution, and (iii) a nonlinearity. The dense map $D_j(\boldsymbol \sigma)$ is then conditioned on the data stream $\boldsymbol{\phi}^{(l)}$
\begin{align}
    D_j(\boldsymbol \sigma)  &= U^{(l)} \,\mathrm{diag}\!\big(\chi(\boldsymbol \phi_j^{(l-1)})\big)\, V^{(l)} . \label{eq:Dpractical}
\end{align}
Here, $\chi$ is a nonlinearity that acts locally on each vector $\boldsymbol \phi_j^{(l-1)}$. The matrices  $U^{(l)},V^{(l)}\!\in\!\mathbb{R}^{d\times d}$ are layer-specific and configuration-independent. Thus $D_j(\boldsymbol \sigma)\!\in\!\mathbb{R}^{d\times d}$ has fixed eigenvectors but spin-configuration–dependent eigenvalues through $\boldsymbol\phi^{(l-1)}_j$. Besides the multiplicative gating $D(\boldsymbol \sigma)$ \eqref{eq:Dpractical} we also inject the $\boldsymbol{\phi}_j^{(l)}$ stream into the $\boldsymbol \sigma^{(l)}\to \boldsymbol \sigma^{(l)} + \boldsymbol \phi^{(l)}$ stream via addition before applying the SSM (see Fig.~\ref{fig:COBALTPractical}).

\section{Benchmarking}

We consider the one-dimensional long-range transverse-field Ising model (TFIM)~\cite{koffel_entanglement_2012,defenu_Long-range_2023}
\begin{align}
H = -\,\frac{J}{\mathcal N_\alpha(N)} \sum_{i\neq j} \frac{\hat{\sigma}_i^z \hat{\sigma}_j^z}{r_{ij}^{\alpha}} - h \sum_i \hat{\sigma}_i^x.
\end{align}

\noindent where $\alpha>0$ controls the decay of the Ising interaction, $r_{ij}=\min(|i-j|,\,N-|i-j|)$ is the periodic distance, and $\mathcal N_\alpha(N)$ is a Kac factor that keeps the energy density well-defined across $\alpha$ \cite{defenu_Long-range_2023}. We use $\mathcal N_\alpha(N)=2\sum_{r=1}^{\lfloor N/2\rfloor} r^{-\alpha}$. 
For $J<0$ the low-field phase is ferromagnetic; for $J>0$ it is antiferromagnetic.
In the limit $\alpha\to\infty$ the model reduces to the exactly solvable nearest-neighbor TFIM, diagonalized via the Jordan--Wigner transformation \cite{jordan_uber_1928,Lieb_Two_1961,Pfeuty_One-dimensional_1970}.
At $\alpha=0$ it becomes the fully connected Lipkin--Meshkov--Glick limit \cite{Lipkin_Validity_1965}.

The long-range TFIM spans a diverse set of ground-state behaviors with clean, controllable tuning via $\alpha$:
(i) for $\alpha>3$ short-range Ising criticality with universal exponents;
(ii) for $5/3 < \alpha < 3$ a genuinely long-range regime with continuously varying, $\alpha$-dependent exponents;
and (iii) for $\alpha < 5/3$ a LR mean-field regime, including dangerous irrelevant variables in finite-size scaling \cite{koziol_quantum-critical_2021}.
Away from criticality, long-range couplings generate algebraic (“hybrid exponential–power-law”) correlation tails even in gapped phases, offering a stringent test for variational ansätze to capture both short- and long-distance structure within one architecture ~\cite{roca-jerat_transformer_2024, vodola_long-range_2016}.

\begin{figure}
    \centering
    \includegraphics[width=\linewidth]{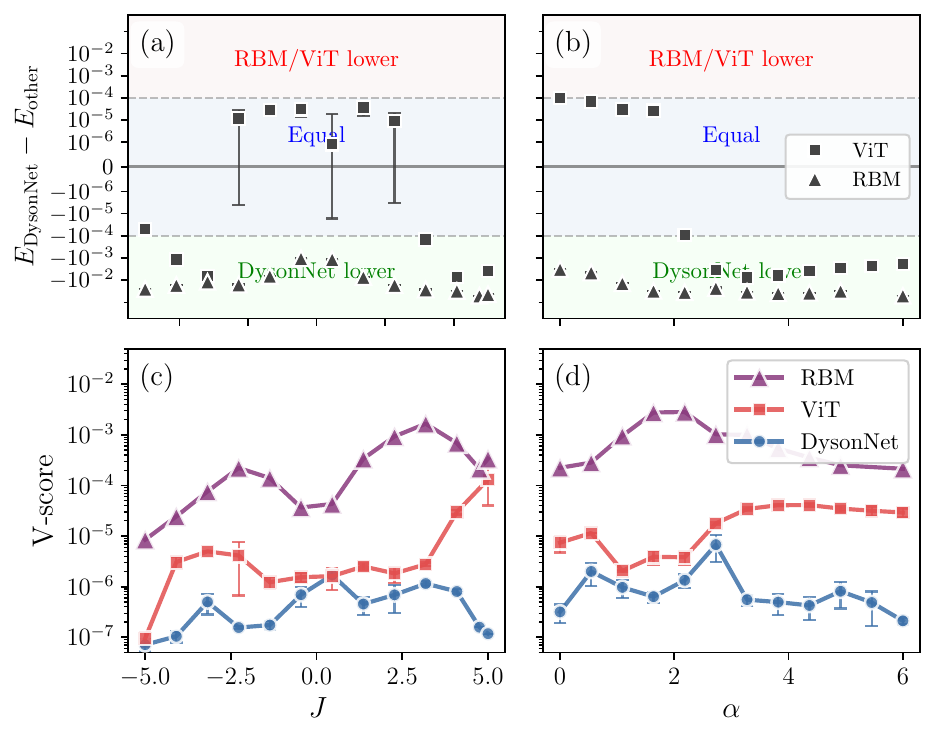}
    \caption{\textbf{\cobalt\  outperforms ViT in ordered phases and on V-score.}
(a–d) Performance of \cobalt\  versus RBM and ViT neural-quantum states (NQS) on the long-range TFIM. 
(a,b) Energy difference \(E_{\mathrm{\cobalt}}-E_{\mathrm{ViT}}\) (squares) and \(E_{\mathrm{\cobalt}}-E_{\mathrm{RBM}}\) (triangles); (c,d) V-score. 
Left: sweep in coupling \(J\) at fixed \(\alpha=4\); right: sweep in \(\alpha\) at fixed \(J=4.75\).
\cobalt\  attains substantially lower energies in the ordered (FM/AFM) regimes by \(10^{-2}\!-\!10^{-3}\), while in the paramagnet ViT is slightly lower (\(<10^{-4}\), effectively equal). \cobalt\  consistently attains lower energies than RBM by at least two orders of magnitude. 
\cobalt\  yields markedly smaller V-scores, especially in the short-range AFM regime (by up to two orders of magnitude vs ViT and four orders of magnitude vs RBM).
Points are means over three training runs with identical sampler settings; error bars denote one s.d.; each model is trained for 400 iterations. System size $N=150$.
}
    \label{fig:ViTvCOBALT}
\end{figure}

We train all models by minimizing the ground-state energy with Variational Monte-Carlo using stochastic reconfiguration (natural-gradient descent). At each iteration we draw Metropolis samples of spin configurations \(\boldsymbol \sigma\) (ABACUS provides \(O(1)\) per-flip amplitudes), estimate the local energy \(E_{\mathrm{loc}}(\boldsymbol \sigma) := (H\Psi)(\boldsymbol \sigma)/\Psi(\boldsymbol \sigma)\) and compute log-derivatives. Parameters are updated as \(\boldsymbol{\theta}\leftarrow\boldsymbol{\theta}+\eta\,\Delta\boldsymbol{\theta}\). Unless noted otherwise, sampler settings, iteration budgets, and solver tolerances are matched across models; detailed training settings are summarized in Appendix Sec.~\ref{app:training_details}.

\begin{figure}
    \centering
    \includegraphics[width=\linewidth]{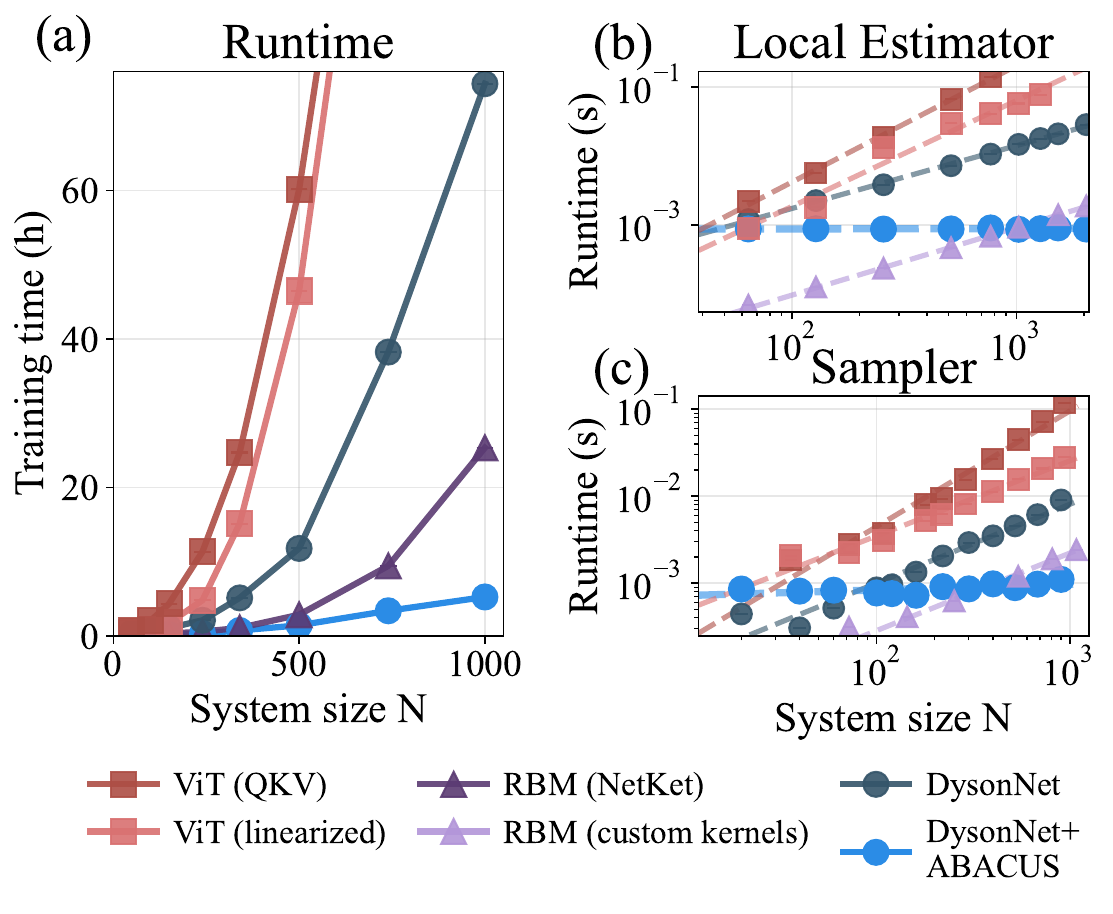}
\caption{\textbf{ABACUS enables constant-time local updates and improves asymptotic scaling.}
(a) Total training time for 400 iterations with Monte-Carlo sweep length \(N/50\), using for each method the hyperparameters that give best overall performance (same choices as in Fig.~\ref{fig:ViTvCOBALT}); at \(N=500\), ViT requires \(\sim60\) h while \cobalt+ABACUS finishes in \(\sim2.5\) h. 
(b) Local-estimator runtime per connected matrix element with \emph{matched} hyperparameters across models (Table~\ref{tab:runtime_matched_hparams}); \cobalt+ABACUS is approximately flat in \(N\) and at \(N=1000\) is \(230\times\) faster than ViT and \(16\times\) faster than \cobalt. 
(c) Single-spin-flip proposal time under the same matched setting; only ABACUS remains approximately constant in \(N\), giving a \(99\times\) speedup vs ViT and \(8.3\times\) vs \cobalt\  at \(N=1000\). RBM baselines have smaller constants than ViT at moderate \(N\) but scale worse than ABACUS-accelerated updates. 
All results use \(J=-5\), \(\alpha=6\), 1024 samples, TF32 precision, and an NVIDIA Tesla P100.}

    \label{fig:kernel_runtime}
\end{figure}

We benchmark \cobalt\ versus ViT and RBM neural quantum states on the long–range TFIM with system size \(N=150\). ViT represents a highly expressive but expensive Ansatz while RBMs are computationally cheap but less expressive.   Training is matched: identical sampler settings, the same system parameters, and an equal budget of 400 iterations. For ViT we adopt the architecture and hyperparameters reported to yield the strongest prior results \cite{roca-jerat_transformer_2024}; for \cobalt\  we optimize architecture parameters independently while keeping the training protocol identical. We perform two sweeps, \(J\) at fixed \(\alpha=4\) and \(\alpha\) at fixed \(J=4.75\). Each configuration is trained three times with independent initializations and sampler seeds; we report the mean and one standard deviation across three training runs. As a secondary accuracy metric we use the V–score. The V-score, defined as $N\mathrm{Var}(E)/E^2$, is a system-size-independent measure of how close the learned state is to an eigenstate. Additional architecture and sampler settings are listed in Appendix Sec.~\ref{app:training_details}.

\cobalt\  achieves substantially lower ground–state energies in the ordered FM and AFM regimes, by \(10^{-2}\) to \(10^{-3}\) against the ViT. In the paramagnet, ViT is  slightly lower, by less than \(10^{-4}\), a scale comparable to typical small–system deviations from the exact energy, so the two are effectively equal. \cobalt\  consistently outperforms the RBM by at least two orders of magnitude.  \cobalt\  attains consistently smaller V–scores across all parameters, with the largest gains in the short–range AFM regime where the scores improve by up to two orders of magnitude and four orders of magnitude versus the RBM.

We compare wall-clock costs for RBM, ViT, linearized ViT~\cite{roca-jerat_transformer_2024}, \cobalt, and \cobalt+ABACUS on the TFIM using the protocol of Fig.~\ref{fig:ViTvCOBALT}. Because these architectures exhibit different accuracy–compute tradeoffs, panel (a) reports end-to-end training time using the best-performing hyperparameters for each method, i.e., the practical cost of obtaining the best accuracy. Panels (b) and (c) isolate the runtime of the local-estimator and sampler-proposal kernels; matched hyperparameters for this comparison are provided in Appendix Table~\ref{tab:runtime_matched_hparams}. For fairness, all compared methods are run in TF32 precision and without custom CUDA kernels for ABACUS. Regarding the RBM, Panel (a) displays the NetKet implementation; however, NetKet prioritizes flexibility and does not use specialized local updates for RBMs. Therefore, to test \cobalt+ABACUS against the theoretical limit of RBM performance, we compare the sampler and estimator against custom RBM kernels (violet) that leverage local updates.


As shown in panel (a) of Fig.~\ref{fig:kernel_runtime}, \cobalt+ABACUS is already strongly advantageous at moderate system sizes: at $N=500$ spins it trains in 1.5 h, compared to 11.8 h for the directly comparable \cobalt\  baseline (same model class without ABACUS) and $46.5$ or $60.1$ h for the ViT variants. It is also about $2\times$ faster than the RBM baseline 2.9 h at $N=500$ . We note that our RBM uses NetKet’s general purpose implementation and is not optimized for the best possible constant factors. At $N=1000$ the gap to the NetKet RBM becomes substantially larger due to different asymptotic scaling, with \cobalt+ABACUS finishing in 5.29 h versus 25.4 h for the RBM (about a $5\times$ difference), while \cobalt\  without ABACUS grows to 74.3 h. Importantly, the RBM is not competitive in accuracy. In our setting—its V-score is worse by $\sim 4$ orders of magnitude—so it is both markedly less accurate and, at larger $N$, slower in wall-clock time.

Panels (b,c) isolate the cost of the two kernels that dominate VMC wall time: evaluating connected matrix elements (local estimator) and proposing single-spin updates. In both cases, \cobalt+ABACUS is the only method with $O(1)$ cost approximately independent of system size $N$. By contrast, \cobalt\  and RBM updates scale roughly as $O(N)$, while ViT-style attention scales as $O(N^{2})$ over the range shown. At $N=1000$, ABACUS yields a $16\times$ (estimator) and $8\times$ (sampler) speedup over the most direct baseline, \cobalt\  without ABACUS; relative to ViT, the gains are $230\times$ and $\sim 100\times$, respectively. For small systems the RBM kernels are faster because their constants are excellent (one matvec plus a nonlinearity per update), but the linear scaling is eventually overtaken: for sampling, \cobalt+ABACUS becomes faster for $N \gtrsim 340$, and for the estimator it becomes more efficient for $N \gtrsim 1000$. Since the RBM is also $\sim 4$ orders of magnitude worse in V-score in our setting, \cobalt+ABACUS can outperform this fast-but-inaccurate baseline in both speed and accuracy at larger $N$. Absolute speedup factors can shift with hyperparameters and implementations, but the scaling advantage is robust and translates into substantial gains in realistic regimes.

We investigate whether the \cobalt\  Ansatz accurately captures ground state correlations in large systems where comparison to ED or ViT is impossible. Using ABACUS, we trained \cobalt\  for $\alpha = 1.5, 2.5,$ and $6$ in the ferromagnetic phase and extracted critical exponents from finite-size-scaling collapses with PyFSSA~\cite{sorge_pyfssa_2015}. Table \ref{tab:crit-exp-compare} lists the extracted critical exponents, and Appendix Figure \ref{fig:collapseTripple} displays the data collapses. Uncertainties represent the maximum values derived from bootstrap and jackknife methods. We achieve system sizes of $N=340$ ($\alpha =6$), $N = 500$ ($\alpha=2.5$), and $N=1000$ ($\alpha =1.5$), surpassing the limit of $N=150$ accessible to ViT \cite{roca-jerat_transformer_2024}. Full training protocol details are provided in Appendix Sec.~\ref{app:training_details}.

Table~\ref{tab:crit-exp-compare} compares the extracted critical points and exponents with exact Stochastic Series Expansion (SSE) benchmarks and with prior Vision Transformer (ViT) results. See also Fig. \ref{fig:collapseTripple} in the appendix for the collapse plots. \cobalt\ reproduces the SSE values across all phases. The only appreciable discrepancy is the critical coupling \(J_c\) at \(\alpha = 1.5\), where our estimate agrees with ViT rather than SSE. This difference likely arises from either generic biases of Neural Quantum States or a minor mismatch in Hamiltonian normalization. In any case, it is not a concern: \(J_c\) is non-universal, whereas the universal critical exponents remain correct.

We obtained these results without tuning the nonlinearity width (\(W=5\)), network depth (\(L=2\)), or proposal spacing (\(s=10\)). Holding these parameters fixed indicates that they do not need to scale with system size, even at criticality. As a result, ABACUS supports constant-time local updates for both the sampler and the local estimator. Most importantly, \cobalt\  improves the ViT estimate of the correlation-length exponent \(\nu\) at \(\alpha = 1.5\), yielding \(\nu \approx 2.04\) instead of \(\nu \approx 1.6\). This gain in precision comes from our ability to study systems up to \(N=1000\), well beyond the \(N \approx 200\) sizes accessible in earlier work~\cite{roca-jerat_transformer_2024}.

To demonstrate that DysonNet+ABACUS is not tailored to the long-range TFIM, we benchmark the same architecture and local-update machinery in Appendix~\ref{app:j1j2} on the frustrated antiferromagnetic $J_1$--$J_2$ Heisenberg chain, a standard test for variational wave-function ans\"atze. This model spans a critical Luttinger-liquid regime and a gapped, spontaneously dimerized phase seperated by a Berezinskii--Kosterlitz--Thouless transition, continuously connecting to the exactly dimerized Majumdar--Ghosh point~\cite{okamoto_fluid-dimer_1992, eggert_numerical_1996, white_dimerization_1996}. In the ED-accessible setting $N=20$, DysonNet maintains high accuracy across the full sweep achieving an energy error of $10^{-4}$ and a V-score of $10^{-3}$.

\begin{table}
\centering
\caption{Critical points and exponents for the long-range TFIM.}
\label{tab:crit-exp-compare}
\setlength{\tabcolsep}{7pt}
\begin{tabular}{c l c c c}
\toprule
\multicolumn{5}{c}{\textbf{FM}}\\
\midrule
$\alpha$ & Method & $J_c$ & $\nu$ & $\beta$ \\
\midrule
1.5 & ViT (Ref.~\cite{roca-jerat_transformer_2024})  & $-1.27(3)$ & $1.6(5)$ & $0.46(7)$ \\
1.5 & SSE (Ref.~\cite{koziol_quantum-critical_2021})                         & $-1.2107$ & $1.9911$ & $0.4968$ \\
1.5 & Theory                                       & ---        & $2.0$    & $0.5$    \\
\addlinespace
1.5 & \textbf{\cobalt}                     &            -1.302(7) &    2.04(6)     &     0.51(1)             \\
\midrule
2.5 & ViT (Ref.~\cite{roca-jerat_transformer_2024}) & $-2.09(1)$ & $1.08(9)$ & $0.19(3)$ \\
2.5 & SSE (Ref.~\cite{koziol_quantum-critical_2021})                         & $-2.0878$ & $1.1084$ & $0.1880$ \\
\addlinespace
2.5 & \textbf{\cobalt}                     &          -2.104(1) &    1.10(4)     &     0.202(2)      \\
\midrule
6.0 & ViT (Ref.~\cite{roca-jerat_transformer_2024}) & $-2.963(1)$ & $1.00(3)$ & $0.12(3)$ \\
6.0 & SSE (Ref.~\cite{koziol_quantum-critical_2021})                         & $-2.9444$  & $0.9849$  & $0.1238$ \\
6.0 & Theory                                       & ---        & $1.0$     & $0.125$  \\
\addlinespace
6.0 & \textbf{\cobalt}                     &          -2.94(1)  &     1.01(7)     &  0.13(4)         \\
\bottomrule
\end{tabular}
\end{table}

\section{Conclusion}

We introduce ABACUS, an efficient local update algorithm, and \cobalt, a physically interpretable architecture. ABACUS updates the wavefunction amplitude for any NQS with global linear layers and strictly local non-linearities in $\mathcal O(1)$ time given cached link tensors. Designed for ABACUS, the architecture \cobalt\ constructs wavefunction amplitudes from local non-linearities connected by global convolutions. It is interpretable as a truncated Dyson series and allows efficient link tensor construction. To amortize caching costs, we employ a screened typewriter sampler that preserves tensors over multiple flips. We derive worst-case runtime bounds and demonstrate a linear speedup. Benchmarks against the long-range transverse-field Ising and frustrated $J_1-J_2$ models show that \cobalt\ matches Vision Transformer accuracy while reducing runtime by two orders of magnitude—an $\mathcal O(N^2)$ asymptotic improvement. Finally, critical exponent calculations confirm that \cobalt\  captures correlations at criticality and scales to large systems.

The ABACUS update is straightforwardly compatible with advanced Monte-Carlo strategies, including cluster moves and parallel tempering, opening the door to further runtime improvements \cite{Swendsen_Nonuniversal_1987, Earl_Parallel_2005, Smith_Optimizing_2025}.
Because the link-tensor construction is agnostic to lattice geometry, \cobalt\ and ABACUS extend naturally to 2D, inviting applications to the 2D systems ~\cite{carleo_solving_2017,lange_architectures_2024,gu_solving_2025,qian_describing_2025}. The same structure also accommodates dynamics: \cobalt\  is compatible with different approaches to time evolution such as the time dependent variational principle~\cite{carleo_solving_2017,schmitt_quantum_2020, schmitt_simulating_2025, reh_time-dependent_2021,nys_ab-initio_2024} or explicitly time dependent NQS~\cite{van_de_walle_many-body_2025,vovk_neural_2025,Romero-Ros_Quantum_2025}.
 Developing efficient algorithms for link tensor construction beyond the translationally invariant \cobalt\  case could yield significant efficiency gains for linearized attention in frustrated or disordered phases \cite{halko_finding_2011, batselier_computing_2018, rende_foundation_2025}.

Beyond \cobalt, this work points to a general NQS design principle: build long-range correlations by wiring strictly local nonlinearities through global linear layers. This might preserve the expressivity of deep networks while adding algebraic structure that ABACUS can exploit for efficient local updates, pushing variational Monte Carlo to larger system sizes. At the same time, the resulting ansätze admit a diagrammatic interpretation as a scattering series, offering a route to theoretical understanding and characterization via field theory and the renormalization group. We believe this offers a path towards architectures where computational efficiency is enabled by physical interpretability.

\begin{acknowledgments}
We thank A.~Bohrdt, N.~Defenu, M.~Heyl, J.~Knolle, D.~Malz, Y.~Minoguchi, and C.~Wanjura for fruitful discussions.
This research was funded in whole or in part by the Austrian Science Fund (FWF) [10.55776/COE1].
The computational results presented have been achieved using the Austrian Scientific Computing (ASC) infrastructure (formerly Vienna Scientific Cluster, VSC). Our NQS simulations were performed using NetKet~\cite{vicentini_netket_2022,carleo_netket_2019}, which is built on top of JAX~\cite{bradbury_jax_2018}, Flax~\cite{heek_flax_2024}, and Optax~\cite{deepmind_optax_2020}.
For Open Access purposes, the authors have applied a CC-BY public copyright license to any author accepted manuscript version arising from this submission.
\end{acknowledgments}

\section*{CODE AVAILABILITY}
We provide a reference implementation of \emph{DysonNet} and ABACUS including the core model components and update algorithm used in this work at \url{https://github.com/lucas-winter/DysonNet.git}. 

\bibliography{citations}

\newpage
\appendix

\section{\cobalt\  can exactly represent power law decay}
\label{app:power-law}

For interpreting the \cobalt\  Ansatz and showing it can express power law decay we will consider the simplifying case of one layer, activation function $f(x) = e^x$ and embedding dimension $d=1$. The log amplitude of the network is 
\begin{align}
\log\Psi(\boldsymbol \sigma)
=\frac{A}{N}\sum_{i,j} D(\sigma_{i-w:i+w})\,G_{i-j}\sigma_j ,
\label{eq:logpsi_short}
\end{align}

\noindent To interpret the local vertex we will apply a Fourier-Walsh expansion. Let the window be $\mathcal W=\{-w,\dots,w\}$ with $W=2w+1$.
Any function $D:\{\pm1\}^{W}\to\mathbb R$ has the exact Walsh expansion

\begin{align}
D(\sigma_{i-w:i+w})
&=\sum_{S\subseteq\mathcal W} D_{S}\,
\prod_{u\in S}\sigma_{i+u}
\nonumber\\[-2pt]
&=D_{\varnothing}
+\sum_{u\in\mathcal W} D_{\{u\}}\sigma_{i+u}
\label{eq:walsh_full_short}\\&+\sum_{u<v} D_{\{u,v\}}\sigma_{i+u}\sigma_{i+v}
+\cdots\  .\nonumber
\end{align}

 \noindent Inserting \eqref{eq:walsh_full_short} into \eqref{eq:logpsi_short} gives
\begin{align}
\log\Psi
=\frac{A}{N}\sum_{S\subseteq\mathcal W}\sum_{I,j}
D_S\,G_{I-j}\,
\bigg(\prod_{u\in S}\sigma_{i+u}\bigg)\,\sigma_j,
\label{eq:coupling_order_short}
\end{align}
\noindent so a Walsh monomial of degree $|S|$ contributes an interaction of degree
$|S|+1$ in the exponent (coupling order).

\noindent We first consider the Jastrow limit for $W = 1$. 
For $W=1$ ($w=0$) only $S=\varnothing$ and $S=\{0\}$ appear, hence
$D(\sigma_i)=D_0+D_1\sigma_i$. Then \eqref{eq:logpsi_short} reduces to
\begin{align}
\log\Psi
&=\frac{A}{N}\sum_I(D_0+D_1\sigma_i)\sum_j G_{i-j}\sigma_j \nonumber\\[-2pt]
&=h\sum_i\sigma_i+\sum_{i<j}J_{ij}\sigma_i\sigma_j+\text{const},
\label{eq:jastrow_short}
\end{align}
where (for even $G_r$ and absorbing $i=j$ into the constant)
\begin{align}
h=\frac{A D_0}{N}\sum_r G_r,
\qquad
J_{ij}=\frac{2A D_1}{N}\,G_{|i-j|}\ . 
\label{eq:Jij_short}
\end{align}

\noindent\paragraph*{Configuration-sum correlator and small-$\lambda$ expansion.}
For the $W=1$ Jastrow limit, diagonal correlators are computed as explicit
configuration sums with respect to the induced classical measure
$p(\boldsymbol \sigma)\propto|\Psi(\boldsymbol \sigma)|^2$,
\begin{align}
\langle\hat{\sigma}_i^z\hat{\sigma}_{i+r}^z\rangle
&=\frac{1}{Z}\sum_{\boldsymbol \sigma}\sigma_i\sigma_{i+r}\,
\exp\!\Big[\,2h\sum_\ell\sigma_\ell
+2\sum_{\ell<m}J_{\ell m}\sigma_\ell\sigma_m\Big],
\label{eq:corr_sum_fleshed}\\[-2pt]
Z
&=\sum_{\boldsymbol \sigma}
\exp\!\Big[\,2h\sum_\ell\sigma_\ell
+2\sum_{\ell<m}J_{\ell m}\sigma_\ell\sigma_m\Big].
\nonumber
\end{align}
To expose the long-distance scaling we set $h=0$ (or, equivalently, consider the
connected correlator), write $J_{\ell m}=\lambda\,\bar J_{\ell m}$, and expand
around $\lambda=0$, where the weight becomes uniform.
Denote averages with respect to the uniform measure by $\langle\cdot\rangle_0$,
i.e. $\langle F\rangle_0:=2^{-N}\sum_{\boldsymbol \sigma}F(\boldsymbol \sigma)$.
Then
\begin{align}
\langle\sigma_i\sigma_{i+r}\rangle
&=\frac{\Big\langle \sigma_i\sigma_{i+r}\,
\exp\!\big[2\lambda\sum_{\ell<m}\bar J_{\ell m}\sigma_\ell\sigma_m\big]
\Big\rangle_0}
{\Big\langle
\exp\!\big[2\lambda\sum_{\ell<m}\bar J_{\ell m}\sigma_\ell\sigma_m\big]
\Big\rangle_0}
\label{eq:numden_expand}\\
&=\frac{
\Big\langle \sigma_i\sigma_{i+r}\Big\rangle_0
+2\lambda\sum_{\ell<m}\bar J_{\ell m}
\Big\langle \sigma_i\sigma_{i+r}\sigma_\ell\sigma_m\Big\rangle_0
+O(\lambda^2)}
{
1+2\lambda\sum_{\ell<m}\bar J_{\ell m}
\Big\langle \sigma_\ell\sigma_m\Big\rangle_0
+O(\lambda^2)
}.
\nonumber
\end{align}

where we expanded the exponential up to second order in the second line. Since under $\langle\cdot\rangle_0$ spins are independent with
$\langle\sigma_a\rangle_0=0$ and $\langle\sigma_a\sigma_b\rangle_0=\delta_{ab}$.
For $r\neq0$ one also has $\langle\sigma_i\sigma_{i+r}\rangle_0=0$.
Moreover, the only non-vanishing four-spin averages are those where the indices
pair up; in particular,
$\langle \sigma_i\sigma_{i+r}\sigma_\ell\sigma_m\rangle_0=1$
iff $\{\ell,m\}=\{i,i+r\}$ (and vanishes otherwise). Hence the sum in
\eqref{eq:numden_expand} collapses to the single edge $(\ell,m)=(i,i+r)$.
\begin{align}
\langle\sigma_i\sigma_{i+r}\rangle
=2\lambda\,\bar J_{i,i+r}+O(\lambda^2)
=2J_{i,i+r}+O(J^2).
\label{eq:corr_first_order}
\end{align}
Using the $W=1$ identification $J_{ij}\propto G_{|i-j|}$ and a power-law kernel
$G_r=r^{-\alpha}$ therefore yields the leading asymptotic behavior
\begin{align}
\langle {\sigma}_i {\sigma}_{i+r}\rangle
=2J_r+O(J^2)
\sim r^{-\alpha}+O(J^2)
.
\label{eq:corr_powerlaw_fleshed}
\end{align}
Higher orders in $\lambda$ correspond to multi-edge contributions (paths and
graphs) and dress the prefactor, while the leading long-distance tail is set by
the direct long-range coupling inherited from $G_r$.

For $W>1$, higher Walsh sectors $D_S$ in \eqref{eq:walsh_full_short}
generate controlled multi-spin vertices of order $|S|{+}1$
via \eqref{eq:coupling_order_short}, adding non-Gaussian short-range dressing
beyond a pure Jastrow factor. Stacking $L$ \cobalt\  blocks increases the
effective coupling order through the Dyson-series structure of repeated
$D(\boldsymbol \sigma)G$ insertions, while preserving the separation of scales:
$G$ fixes the long-distance response (e.g.\ algebraic kernels suitable for
criticality) and $D$ learns non-universal local structure. Allowing complex
parameters further enables nontrivial sign/phase structure beyond strictly
positive Jastrow states.

\section{Induction Proof of ABACUS}
\label{sec:induction_proof} 
In this section we prove Theorem \ref{thm:abacus} and the recursion relations
\eqref{eq:recur_step} and \eqref{eq:recur_out} via induction.
For a local update \(\boldsymbol \sigma=\boldsymbol \sigma_0+\Delta\boldsymbol \sigma\), define
\begin{align}
\mathcal M(\boldsymbol \sigma)
&:=\prod_{l=1}^{L}\!\bigl(D^{(l)}_0+\Delta D^{(l)}\bigr)M^{(l)},
\\
\boldsymbol \Omega(\boldsymbol \sigma)
&=A\,P_M\,\mathcal M(\boldsymbol \sigma)\,\boldsymbol \sigma .
\end{align}

A \emph{matrix word} is one summand in the expansion of \(\mathcal M(\boldsymbol \sigma)\).
Its \emph{pivot} is the largest layer index \(l\) for which \(\Delta D^{(l)}\) appears.
Let \(\mathcal W_l\) be the set of words with pivot \(l\), and define
\begin{equation}
\tilde{\boldsymbol h}^{(l)}
:=P_{S,j}\sum_{W\in\mathcal W_l}W\,\boldsymbol \sigma
\quad\in\mathbb R^{W\times d}.
\end{equation}

\begin{invariant}[after line 3 of Algorithm~\ref{alg:local-update-short}]
For every \(l\le L\), the recursion \eqref{eq:recur_step} equals the sum over all words with pivot \(l\).
\begin{equation}
\boldsymbol{\tilde h}^{(l)}
=P_{S,j}\sum_{W\in\mathcal W_l}W\,\boldsymbol \sigma\;.
\label{eq:inv}
\end{equation}
\end{invariant}

\noindent \emph{Proof.} We use strong induction over \(l\).

\smallskip
\emph{Base case \(l=1\).}

For pivot \(1\), the only scattering layer is the first one.
\begin{align}
\tilde{\boldsymbol h}^{(1)}
&=P_{S,j}\,\Delta D^{(1)}M^{(1)}(\boldsymbol \sigma_0+\Delta\boldsymbol \sigma)
\nonumber\\
&=\Delta D^{(1)}\!\left(M^{(1)}\Delta\boldsymbol \sigma+\tilde{\boldsymbol h}^{(1)}_0\right),
\end{align}
which is exactly the initialization in Algorithm~\ref{alg:local-update-short}.

\smallskip
\emph{Induction step.}
Assume \eqref{eq:inv} holds for all \(k<l\). Any word with pivot \(l\) is obtained by
applying \(\Delta D^{(l)}M^{(l)}\) to either:
(i) the unscattered incident field at layer \(l\), or
(ii) a word with pivot \(m<l\), propagated from \(m\) to \(l\) through background factors
\(D_0^{(n)}M^{(n)}\), \(n=m+1,\dots,l-1\).
Therefore,
\begin{align}
\sum_{W\in\mathcal W_l}W\boldsymbol \sigma
&=
\Delta D^{(l)}M^{(l)}
\bigg[
\boldsymbol h^{(l)}_0
 \\&+\sum_{m=1}^{l-1}
\left(\prod_{n=m+1}^{l-1}D_0^{(n)}M^{(n)}\right)
\sum_{W\in\mathcal W_m}W\boldsymbol \sigma
\bigg]\nonumber.
\end{align}
Projecting to the slice and using the induction hypothesis gives
\begin{align}
\tilde{\boldsymbol h}^{(l)}
=
\Delta D^{(l)}\!\left(
\tilde M^{(l)}\tilde{\boldsymbol h}^{(l-1)}
+\sum_{m=1}^{l-2}L^{(l,m)}\tilde{\boldsymbol h}^{(m)}
+\tilde{\boldsymbol h}_0^{(l)}
\right),
\end{align}
where \(L^{(l,m)}\) is exactly the link tensor from Eq.~\eqref{eq:Lmn}.
\begin{align}
L^{(l,m)}
=
P_{S,j}M^{(l)}
\left(\prod_{n=m+1}^{l-1}D_0^{(n)}M^{(n)}\right)P_{S,j}^{\top}.
\end{align}
This is Eq.~\eqref{eq:recur_step}, so the invariant is preserved.

\emph{Termination:} After the loop (\(l=L\)), \eqref{eq:inv} holds for every pivot class.
Grouping the full expansion by pivot and contracting each class with its output link
tensor \(T^{(l)}\) yields
\begin{align}
\boldsymbol \Omega(\boldsymbol \sigma)
=
A\!\left(P_M\,\tilde{\boldsymbol h}^{(L)}
+ \sum_{l=1}^{L}T^{(l)}\tilde{\boldsymbol h}^{(l)}\right),
\end{align}
which is Eq.~\eqref{eq:recur_out}. Hence ABACUS produces the desired matrix-vector
product exactly.

\emph{Complexity.} Using only the principal blocks of the Toeplitz matrix, contracting a vector $\boldsymbol{\tilde h}^{(l)}$ with $G^{(l)}$ and then applying the slice projection $P_{S,j}$ requires $\mathcal O\!\left(d\,W^{2}\right)$ operations, where $W = 2w+1 \ll N$ denotes the slice width and $d$ is the hidden‐state dimension.  If $W$ becomes large, the same contraction can instead be performed with a state-space model (SSM) in $\mathcal O(W\,d\,s^{2})$ time, which scales strictly linearly in $W$.

Multiplying $\boldsymbol{\tilde h}^{(l)}$ by the sparse diagonal update $\Delta D^{(l)}$ costs $\mathcal O\!\left(W\,d^{2}\right)$, again linear in the slice width.  The link tensors at the final layer, $T^{(l)}$, incur the same $\mathcal O\!\left(W\,d^{2}\right)$ cost, while the most expensive step is contracting with the multi-layer link tensors, which scales as $\mathcal O\!\left(W^{2}\,d^{2}\right)$ yet remains independent of the full system size $N$.

Since every individual contraction is free of any $N$-dependence, the entire update can be carried out in constant time with respect to $N$.  For a network of $L$ layers, roughly $L(L-1)/2 \sim \mathcal O(L^{2})$ such contractions are required, giving the overall worst-case complexity $\mathcal O\!\bigl(d^{2}W^{2}L^{2}\bigr),$
which is constant in $N$ and quadratic in both the number of layers and the slice width.

By a strong-induction loop-invariant argument we have shown that
Algorithm~\ref{alg:local-update-short} constructs every matrix word
exactly once, associates it with the correct pre-contracted linear
environment, and thus evaluates the network output \(\boldsymbol \Omega(\boldsymbol \sigma)\)
at \(\mathcal O(d^2W^2 L^2)\) constant in $N$ cost for a local spin flip.
\hfill\(\blacksquare\)

\section{Screened Metropolis sampler is exact} \label{app:screenedProof}
Let $\pi(\boldsymbol \sigma)\propto |\Psi(\boldsymbol \sigma)|^2$ be the target distribution and consider a symmetric proposal
$q(\boldsymbol \sigma\to\boldsymbol \sigma')=q(\boldsymbol \sigma'\to\boldsymbol \sigma)$.
Define the exact Metropolis--Hastings ratio
\begin{equation}
R(\boldsymbol \sigma,\boldsymbol \sigma') := \frac{|\Psi(\boldsymbol \sigma')|^2}{|\Psi(\boldsymbol \sigma)|^2},\qquad 
\alpha(\boldsymbol \sigma\to\boldsymbol \sigma') := \min\{1,R(\boldsymbol \sigma,\boldsymbol \sigma')\}.
\end{equation}
Assume we can compute deterministic bounds $L(\boldsymbol \sigma,\boldsymbol \sigma')$ and $U(\boldsymbol \sigma,\boldsymbol \sigma')$ such that
\begin{equation}
\label{eq:LUbound}
L(\boldsymbol \sigma,\boldsymbol \sigma') \le R(\boldsymbol \sigma,\boldsymbol \sigma') \le U(\boldsymbol \sigma,\boldsymbol \sigma') \qquad \text{for all } (\boldsymbol \sigma,\boldsymbol \sigma').
\end{equation}
Consider the screened acceptance rule: draw $u\sim\mathcal U[0,1]$ and
(i) accept if $u<\min\{1,L\}$; 
(ii) reject if $u>\min\{1,U\}$; 
(iii) otherwise (the ambiguous window) compute $R$ exactly and accept iff $u<\min\{1,R\}$.
Then the resulting accept/reject decision coincides with the standard MH decision for every triple
$(\boldsymbol \sigma,\boldsymbol \sigma',u)$, hence the induced Markov chain satisfies detailed balance with respect to $\pi$ and is exact.

\emph{Proof.}
By \eqref{eq:LUbound} we have $\min\{1,L\}\le \min\{1,R\}\le \min\{1,U\}$.
If $u<\min\{1,L\}$ then $u<\min\{1,R\}$ and both screened MH and MH accept.
If $u>\min\{1,U\}$ then $u>\min\{1,R\}$ and both reject.
Otherwise $u\in[\min\{1,L\},\min\{1,U\}]$ and the screened rule evaluates $R$ and applies the MH test
$u<\min\{1,R\}$, which is exactly the MH decision.
Therefore the transition kernel is identical to MH, and detailed balance
$\pi(\boldsymbol \sigma)q(\boldsymbol \sigma\to\boldsymbol \sigma')\alpha(\boldsymbol \sigma\to\boldsymbol \sigma')=\pi(\boldsymbol \sigma')q(\boldsymbol \sigma'\to\boldsymbol \sigma)\alpha(\boldsymbol \sigma'\to\boldsymbol \sigma)$
holds. \qed

\section{Upper bound error scaling for independent scattering approximation} \label{app:metropolis-error}

In this section we will explicitly derive an upper bound for the error in the Metropolis acceptance ratio under the \emph{independent scattering approximation} introduced in section \ref{sec:sampler}.
For a configuration $\boldsymbol \sigma$ and a fixed reference configuration $\boldsymbol \sigma_0$, define
\begin{equation}
  \epsilon(\boldsymbol \sigma)
  := \left|\frac{\Psi_{\mathrm{ISA}}(\boldsymbol \sigma)-\Psi(\boldsymbol \sigma)}{\Psi(\boldsymbol \sigma_0)}\right|.
  \label{eq:acceptance-error}
\end{equation}

Let the wavefunction be of the form
\begin{align}
\begin{split}
  \Psi(\boldsymbol \sigma) &= f\!\big(\vect z(\boldsymbol \sigma)\big),\\
  \vect z(\boldsymbol \sigma) &:= A\,P_M\!\Bigg(G^f\sum_{M=0}^{L}\prod_{l=1}^{M} D^{(l)}(\boldsymbol \sigma)\,G^{(l)}\Bigg)\,E\,\boldsymbol \sigma,
  \label{eq:network}
\end{split}
\end{align}
where $P_M$ is the mean-pooling projection vector, $A\in\mathbb{R}^{d\times d}$, $D^{(l)}\in\mathbb{R}^{d\times d\times N}$ (block diagonal along the input sequence), and $G^{(l)}\in\mathbb{R}^{N\times N\times d}$ (Toeplitz, block diagonal in $d$). 
Fix the reference $\boldsymbol \sigma_0$ and set
\begin{equation}
  \vect z_0 := \vect z(\boldsymbol \sigma_0), 
  \quad
  \Delta \vect z(\boldsymbol \sigma) := \vect z_{\mathrm{ISA}}(\boldsymbol \sigma) - \vect z(\boldsymbol \sigma),
  \quad
  \boldsymbol \sigma = \boldsymbol \sigma_0 + \Delta\boldsymbol \sigma.
  \label{eq:defs-z}
\end{equation}

\noindent with 

\begin{align}
    z_{\text{ISA}}(\boldsymbol \sigma) = z(\boldsymbol \sigma_0) + \sum_k \text{ABACUS}(\Delta \boldsymbol \sigma_k , \boldsymbol \sigma_0, \mathcal T(\boldsymbol \sigma_0)) \ , 
\end{align}

\noindent with the link tensors $\mathcal T( \boldsymbol{\sigma}_0) = \text{LinkTensor}(\boldsymbol \sigma_0)$ and the flips $\Delta \boldsymbol \sigma_k$. Then~\eqref{eq:acceptance-error} can be written as
\begin{equation}
  \epsilon(\boldsymbol \sigma)
  = \frac{\big|\,f(\vect z(\boldsymbol \sigma)+\Delta\vect z(\boldsymbol \sigma)) - f(\vect z(\boldsymbol \sigma))\,\big|}
         {f(\vect z_0)}.
  \label{eq:eps-with-f}
\end{equation}

For the choice
\(
f(\vect z)=\sum_{j=1}^{d}\cosh(z_j)
\)
(with \(f(\vect z_0)>0\)), one can bound the numerator of~\eqref{eq:eps-with-f} without any linearization.
Using the identity
\(
\cosh(u+v)=\cosh(u)\cosh(v)+\sinh(u)\sinh(v)
\)
and the inequalities \(|\sinh(u)|\le \cosh(u)\) and \(\cosh(v)+|\sinh(v)|=e^{|v|}\), we obtain for all \(u,v\in\mathbb{R}\)
\begin{align}
  \big|\cosh(u+v)-\cosh(u)\big|
  &\le \cosh(u)\big(\cosh(v)-1+|\sinh(v)|\big)
  \\ &= \big(e^{|v|}-1\big)\cosh(u).
  \label{eq:cosh-lip-exp}
\end{align}
Applying~\eqref{eq:cosh-lip-exp} componentwise with \(u=z_j(\boldsymbol \sigma)\) and \(v=\Delta z_j(\boldsymbol \sigma)\), and using
\(
e^{|\Delta z_j|}-1 \le e^{\|\Delta\vect z(\boldsymbol \sigma)\|_\infty}-1
\),
yields
\begin{align}
  \epsilon(\boldsymbol \sigma)
  &\le \frac 1 {f(\boldsymbol z_0)} \sum_{j=1}^{d}\big(e^{|\Delta z_j(\boldsymbol \sigma)|}-1\big)\cosh\!\big(z_j(\boldsymbol \sigma)\big)
  \nonumber\\
  &\le \frac 1 {f(\boldsymbol z_0)} \big(e^{\|\Delta\vect z(\boldsymbol \sigma)\|_\infty}-1\big)\sum_{j=1}^{d}\cosh\!\big(z_j(\boldsymbol \sigma)\big)
  \nonumber\\
  &= \big(e^{\|\Delta\vect z(\boldsymbol \sigma)\|_\infty}-1\big)\,\frac {f(\vect z(\boldsymbol \sigma))} {f(\boldsymbol z_0)} .
  \\  &\leq \big(e^{\|\Delta\vect z(\boldsymbol \sigma)\|_2}-1\big)\,\frac {f(\vect z(\boldsymbol \sigma))} {f(\boldsymbol z_0)} .
  \label{eq:eps-exact-bound}
\end{align}

We note this exact result can be for small $||\Delta \boldsymbol z||$ approximated as 
\begin{equation}
\;
  \epsilon(\boldsymbol \sigma) \;\lesssim\; \norm{\Delta\vect z(\boldsymbol \sigma)}_2\frac {f(\vect z(\boldsymbol \sigma))} {f(\boldsymbol z_0)} .
  \;
  \label{eq:key-result}
\end{equation}
\noindent The acceptance error~\eqref{eq:acceptance-error} for the mean-field multi-flip approximation is controlled by the feature-space deviation $\Delta\vect z$:
the linearized estimate~\eqref{eq:key-result} shows that bounding $\norm{\Delta\vect z}_2$ suffices to bound~$\epsilon$.

\subsection*{Scaling of \texorpdfstring{$\norm{\Delta \vect z}_2$}{||Δz||\_2} under separated flips}

We now derive the scaling of $\norm{\Delta \vect z}_2$. Decompose the update into spatially disjoint contributions
\begin{equation}
  \Delta \boldsymbol \sigma = \sum_{j\in F}\Delta \sigma_j,
  \qquad F=\{\text{sites flipped relative to }\boldsymbol \sigma_0\}, 
  \label{eq:sigma-decomp}
\end{equation}
with the total number of flips $M:=|F|$. Assume flips are separated by at least a window $w$,
\begin{equation}
  |i-j|>w\quad\text{for all }i\neq j\in F,
  \label{eq:sep-assumption}
\end{equation}
so that the layerwise perturbations admit a separable decomposition with disjoint spatial support,
\begin{equation}
  \Delta D^{(\ell)}(\boldsymbol \sigma,\boldsymbol \sigma_0)
  = \sum_{j\in F}\Delta D^{(\ell)}_{j}\!\left(\Delta \sigma_j,\boldsymbol \sigma_0\right).
  \label{eq:DeltaD-decomp}
\end{equation}
Writing
\begin{equation}
  \tilde D^{(\ell)}_{\alpha} :=
  \begin{cases}
    D^{(\ell)}_{0}, & \alpha=0,\\[3pt]
    \Delta D^{(\ell)}_{j}, & \alpha=j\in F,
  \end{cases}
  \label{eq:Dtilde}
\end{equation}
we may expand the exact and approximate features as
\begin{align}
  \Delta \boldsymbol z 
  &= A\,P_M \!\!\sum_{\substack{\boldsymbol{\alpha}\in(\{0\}\cup F)^L\\ |\{\alpha_\ell\in F\}|\ge 2}}
     \Bigg(\prod_{\ell=1}^{L}\tilde D^{(\ell)}_{\alpha_\ell}G^{(\ell)}\Bigg) E\,\boldsymbol \sigma.
  \label{eq:z-diff-words}
\end{align}
In \eqref{eq:z-diff-words} we have dropped all ``multi-flip within one word'' terms with a single flip ($|\{\alpha_\ell\in F\}|=1$), i.e.\ we retain only cross-insertion words involving two or more distinct flips.

By submultiplicativity of the spectral norm,
\begin{equation}
  \norm{AB}_2\le \norm{A}_2\,\norm{B}_2,
  \label{eq:submult}
\end{equation}
and, for mean pooling
\begin{align}
  P_M := \frac{1}{N}\,[\,I_d\;I_d\;\cdots\;I_d\,]\in\mathbb{R}^{d\times dN},
  \\
  \norm{P_M}_2=\frac{1}{\sqrt{N}},\quad \norm{\boldsymbol \sigma}_2=\sqrt{N},
  \label{eq:mean-pool}
\end{align}
we obtain the cancellation $\norm{P_M}_2\norm{\boldsymbol \sigma}_2=1$ and thus
\begin{equation}
  \norm{\Delta\vect z}_2
  \le
  \norm{A}_2\;
  \bigg\|
  \sum_{\substack{\boldsymbol{\alpha}\in(\{0\}\cup F)^L\\ |\{\alpha_\ell\in F\}|\ge 2}}
  \Big(\prod_{\ell=1}^{L}\tilde D^{(\ell)}_{\alpha_\ell}G^{(\ell)}\Big)
  \bigg\|_2\;
  \norm{E}_2.
  \label{eq:deltaz-pre}
\end{equation}
We absorb fixed-depth and fixed-channel constants into the following envelopes. For each layer define
\begin{equation}
  d_\ell:=\big\|D^{(\ell)}_{0}\big\|_{2},\quad
  \delta_\ell:=\sup_{j}\big\|\Delta D^{(\ell)}_{j}\big\|_{2},
  \label{eq:layer-bounds}
\end{equation}
Treating each $G^{(\ell)}$ as a Toeplitz/convolution operator, a Young–Schur envelope yields, for $n<\ell$ and $i\neq j$,
\begin{align}
  \big\|\Delta D_i^{(\ell)}\,G^{(\ell)}\,(\cdots)\,G^{(n+1)}\,\Delta D_j^{(n)}\big\|_{2}
  \;\le\;
  \delta_\ell\,\delta_{n}\;C^{(n\to\ell)}\;\kappa(|i-j|),
  \label{eq:YS-envelope}
\end{align}
where $\kappa$ is an effective two-point decay kernel in the same class as the $g_\ell$ (exponential or power law), and $C^{(n\to\ell)}$ collects intermediate operator norms between the two insertions.

Absorb all layer bookkeeping into the finite constant (for fixed depth $L$)
\begin{equation}
  K_2
  := \norm{A}_2\,\norm{E}_2
     \sum_{1\le n<\ell\le L}
     C^{(n\to\ell)}\,\delta_\ell\,\delta_n
     \prod_{m\notin[n,\ell]} d_m.
  \label{eq:K2-def}
\end{equation}
Then the contribution from \emph{two-insertion} cross terms obeys
\begin{equation}
\;
  \big\|\Delta \vect z\big\|_{2}^{(r=2)}
  \;\le\;
  K_2 \sum_{\substack{i,j\in F\\ i\neq j}}\kappa\!\big(|i-j|\big).
  \;
  \label{eq:pairwise-bound}
\end{equation}
Higher-order insertions ($r\ge 3$) are controlled by higher convolutions of $\kappa$ and extra small factors $\delta_\ell$; for summable kernels (e.g.\ exponential decay or power law with exponent $>1$) or sufficiently large separation $w$ they are subleading compared to \eqref{eq:pairwise-bound}. We therefore focus on \eqref{eq:pairwise-bound} to obtain scaling laws.

\subsubsection*{Counting pairs under the separation assumption}

Let the flips be ordered along the 1D geometry, and assume \eqref{eq:sep-assumption}. A worst-case upper bound is obtained when flips are arranged with minimal allowed spacing $w$, so that
\begin{equation}
  \sum_{\substack{i,j\in F\\ i\neq j}}\kappa(|i-j|)
  \;\le\;
  2\sum_{m=1}^{M-1}(M-m)\,\kappa(m\,w),
  \label{eq:pair-sum-count}
\end{equation}
where $m$ counts the (coarse-grained) separation in units of $w$ and the factor $2$ accounts for ordered pairs $(i,j)$, $i\neq j$.
Combining \eqref{eq:pairwise-bound} and \eqref{eq:pair-sum-count},
\begin{equation}
  \big\|\Delta \vect z\big\|_{2}^{(r=2)}
  \;\le\;
  2K_2 \sum_{m=1}^{M-1}(M-m)\,\kappa(m\,w).
  \label{eq:pair-master}
\end{equation}

\paragraph{Exponential decay.}
If $\kappa(r)\le C_{\exp}\,e^{-r/\xi}$, then
\begin{align}
  \sum_{m=1}^{M-1}(M-m)\,e^{-m w/\xi}
  &= M\!\sum_{m=1}^{M-1}e^{-m w/\xi} - \sum_{m=1}^{M-1}m\,e^{-m w/\xi}
  \nonumber\\
  &\le M\,\frac{e^{-w/\xi}}{1-e^{-w/\xi}},
\end{align}
whence
\begin{equation}
\;
  \big\|\Delta \vect z\big\|_{2}^{(r=2)}
  \;\lesssim\;
  2K_2\,C_{\exp}\,M\,
  \frac{e^{-w/\xi}}{1-e^{-w/\xi}}
  \;=\;\mathcal{O}\!\big(M\,e^{-w/\xi}\big).
  \;
  \label{eq:exp-scaling}
\end{equation}
In particular, for $w\gg \xi$ the pairwise error is exponentially small in $w$ and only linear in the number of flips $M$.

\paragraph{Power-law decay.}
If $\kappa(r)\le C_{\alpha}\,r^{-\alpha}$ for $r\ge w$ with $\alpha>0$, then from \eqref{eq:pair-master}
\begin{equation}
  \big\|\Delta \vect z\big\|_{2}^{(r=2)}
  \;\le\;
  2K_2\,C_{\alpha}\,w^{-\alpha}
  \sum_{m=1}^{M-1}(M-m)\,m^{-\alpha}.
  \label{eq:plaw-master}
\end{equation}
The sum admits the asymptotics (as $M\to\infty$) so that
\begin{equation}
 \;
  \big\|\Delta \vect z\big\|_{2}^{(r=2)}
  \;\lesssim\;
  \begin{cases}
    \displaystyle 2K_2\,C_{\alpha}\,w^{-\alpha}\,M^{\,2-\alpha}, & 0<\alpha<1,\\[10pt]
    \displaystyle 2K_2\,C_{1}\,w^{-1}\,M\,\log M, & \alpha=1,\\[6pt]
    \displaystyle 2K_2\,C_{\alpha}\,\zeta(\alpha)\,w^{-\alpha}\,M, & \alpha>1.
  \end{cases}
  \;
  \label{eq:plaw-scaling}
\end{equation}
Thus, in the non-extensive regime $0<\alpha<1$ the pairwise error grows superlinearly with $M$ as $M^{2-\alpha}$, while for $\alpha\ge 1$ it is at most quasi-linear in $M$ (with a mild $M\log M$ at $\alpha=1$), and suppressed by $w^{\alpha}$.

\paragraph{Remark on higher-order insertions.}
Words with $r\ge 3$ distinct insertions are bounded by constants $K_r$ analogous to \eqref{eq:K2-def} and $r$-fold convolutions of $\kappa$, leading to
\begin{equation}
  \norm{\Delta \vect z}_2
  \;\le\; \sum_{r=2}^{L} K_r\,\mathcal{S}_r(M,w),
  \qquad
  \mathcal{S}_2(M,w)\text{ as in \eqref{eq:pair-master}}.
\end{equation}
For exponentially decaying $\kappa$ the $r$-fold envelopes remain $\mathcal{O}(e^{-w/\xi})$ and are suppressed by extra small factors $\delta_\ell$, rendering $r\ge 3$ subleading. For power laws with $\alpha>1$, $\kappa\in \ell^1(\mathbb{Z})$ and higher convolutions are summable; the pairwise term dominates. In the marginal/non-extensive case $0<\alpha\le 1$, pairwise still controls provided $\max_\ell \delta_\ell$ is small enough or $w$ is scaled with $M$ so that $w^{-\alpha}$ offsets the combinatorics of $r$-tuples.

\section{Sampling Complexity} \label{app:sampling-complexity}

\noindent
We model premature freezing of an approximate Metropolis chain via a step--dependent stop (``hazard") probability (see also Ref. ~\cite{barlow_Statistical_1981, kalbfleisch_Statistical_2002}). The objective is to compute the expected number of successful moves before the chain either freezes or reaches a deterministic cap. At the $k$-th attempted local flip ($k=1,2,\dots$) the chain freezes with probability
\begin{equation}
  p_k \;=\; p_0\,k^\gamma, 
  \qquad \gamma\ge 0,
  \label{eq:hazard-pk}
\end{equation}
otherwise it proceeds to $k{+}1$. Independently of freezing, the chain is capped after
\begin{equation}
  K \;=\; \frac{N}{w} \;=\; \frac{N^{\,1-\beta}}{w_0},
  \qquad w(N)=w_0 N^\beta,\;\; 0\le \beta \le 1,
  \label{eq:cap-K1}
\end{equation}
accepted moves. We denote by $M$ the number of accepted flips before the first of \emph{freeze} or \emph{cap} occurs, and we seek $\mathbb E[M]$.

\subsection*{General hazard results}

\paragraph{Exact expression.}
Survival through step $k$ requires not freezing in any of the steps $1,\dots,k$. Thus
\begin{equation}
  \Pr(M\ge k) \;=\; \prod_{i=1}^{k}\!\big(1-p_0\,i^\gamma\big),
  \qquad k\le K,
  \label{eq:survival}
\end{equation}
and by the tail--sum formula
\begin{equation}
  \;
  \mathbb E[M] \;=\; \sum_{k=1}^{K}\;\prod_{i=1}^{k}\!\big(1-p_0\,i^\gamma\big)\; .
  \label{eq:EM-exact}
\end{equation}
(If $p_0\,i^\gamma\ge 1$ for some $i$, the product---and thus all later terms---vanishes.)

When $p_0 i^\gamma\ll 1$ up to the relevant $i$, we use $\log(1-x)\approx -x$ and
$\sum_{i=1}^{k} i^\gamma\approx k^{\gamma+1}/(\gamma{+}1)$ to obtain
\begin{equation}
  \mathbb E[M] \;\approx\; \sum_{k=1}^{K}\!
  \exp\!\Big(-\frac{p_0}{\gamma+1}k^{\gamma+1}\Big)
  \;\approx\; \int_0^{K} e^{-a t^{\gamma+1}}\,dt,
  \label{eq:EM-small}
\end{equation}

\noindent with $a := p_0/(\gamma+1)$. Define the \emph{hazard length}
\begin{equation}
 \;
  \ell := a^{-1/(\gamma+1)} \;=\;
  \Big(\frac{\gamma+1}{p_0}\Big)^{\!1/(\gamma+1)} \;
  \label{eq:hazard-length}
\end{equation}
at which the cumulative hazard is $O(1)$. Comparing $\ell$ to the cap $K$ in \eqref{eq:cap-K1} yields two regimes:

\emph{Hazard-limited} ($K\gg \ell$): the integral in \eqref{eq:EM-small} saturates,
\begin{equation}
\;
  \mathbb E[M] \;\sim\; C_\gamma\,p_0^{-1/(\gamma+1)},\qquad
  C_\gamma=\frac{\Gamma\!\big(\tfrac{1}{\gamma+1}\big)}{(\gamma+1)^{\gamma/(\gamma+1)}}\;
  \label{eq:EM-hazardGeneral}
\end{equation}

\emph{Cap-limited} ($K\ll \ell$): freezing is negligible before the cap,
\begin{equation}
  \; \mathbb E[M] \;\sim\; K \;=\; \frac{N}{w} \;
  \label{eq:EM-cap1}\end{equation}
The crossover is conveniently expressed as
\begin{equation}
  p_0\,K^{\gamma+1}\;\sim\; 1,
  \label{eq:crossover}
\end{equation}
equivalently $K\sim \ell$ by \eqref{eq:hazard-length}.

\bigskip

\paragraph{Exponentially decaying correlations (area law).}
A standard bound gives an error per window of the form
$\epsilon \lesssim C\,e^{-w/\xi}\,w^{-1}(k/K)$, suggesting a \emph{linear} hazard $\gamma=1$ with
\begin{equation}
  p_k = p_0 k,\qquad p_0 \sim C e^{-w/\xi}.
  \label{eq:area-p0}
\end{equation}

To keep the error in \eqref{eq:area-p0} controlled, we can rescale the minimum spacing logarithmically with $N$ by setting $w(N)=w_0\log N$. Comparing $K$ and $\ell$,
\begin{equation}
\frac{K}{\ell}\;\sim\; \frac{N^{\,1-w_0/(2\xi)}}{\log N}\,,
\end{equation}
so one crosses from hazard- to cap-limited as $w_0$ increases. In particular, for sufficiently large $w_0$ (e.g.\ $w_0>2\xi$ in our conservative criterion), the chain is \emph{cap-limited} and by \eqref{eq:EM-cap1}
\begin{equation}
\; \mathbb E[M] \;\sim\; \frac{N}{w_0\log N}\;  ,
  \label{eq:area-cap-scaling}
\end{equation}
yielding quasi-linear sampling per pass, $\mathcal O(N\log N)$.

\paragraph{Weak long-range power law ($\alpha>1$; resummable tail).}
The mean-field error bound gives a \emph{linear} hazard ($\gamma=1$) with a window–controlled prefactor
\begin{equation}
  p_0 \sim \frac{C}{w^{\alpha}},\qquad \alpha>1.
  \label{eq:plr-p0}
\end{equation}
The associated hazard length is $\ell \sim p_0^{-1/2}\sim w^{\alpha/2}$. Let the window scale as
$w(N)=w_0 N^{\beta}$, and recall the cap size
\begin{equation}
  K \sim \frac{N}{w} \sim N^{1-\beta}.
  \label{eq:cap-K2}
\end{equation}

 When $K\gg \ell$, the process freezes from the hazard before hitting the cap, and
\begin{equation}
  \mathbb{E}[M] \;\sim\; \ell \;\sim\; p_0^{-1/2} 
  \;\sim\; N^{\frac{\alpha\beta}{2}}.
  \label{eq:EM-hazard}
\end{equation}
This \textbf{increases} monotonically with $\beta$ (a larger window suppresses the hazard, allowing more accepted flips). When $K\ll \ell$, the chain almost never freezes early and runs to the cap,
\begin{equation}
  \mathbb{E}[M] \;\sim\; K \;\sim\; \frac{N}{w}
  \;\sim\; N^{1-\beta}.
  \label{eq:EM-cap}
\end{equation}
This \textbf{decreases} monotonically with $\beta$ (a larger window lowers the cap).

Since the hazard-limited value grows with $\beta$ while the cap-limited value shrinks with $\beta$, the optimum is attained precisely at the \emph{crossover} where the two branches match.
\begin{equation}
  N^{1-\beta} \;\sim\; N^{\frac{\alpha}{2}\beta}
  \qquad\Longrightarrow\qquad
 \,\beta_c=\frac{2}{2+\alpha}\,  .
  \label{eq:crossover2}
\end{equation}
At this $\beta_c$, the two expressions \eqref{eq:EM-hazard} and \eqref{eq:EM-cap} coincide and yield
\begin{equation}
\,\mathbb{E}[M]_{\mathrm{opt}}
  \;\sim\; N^{1-\beta_c}
  \;=\; N^{\frac{\alpha}{2+\alpha}}\, .
  \label{eq:plr-opt}
\end{equation}

\begin{figure}[H]
    \centering
    \includegraphics[width=1.0\linewidth]{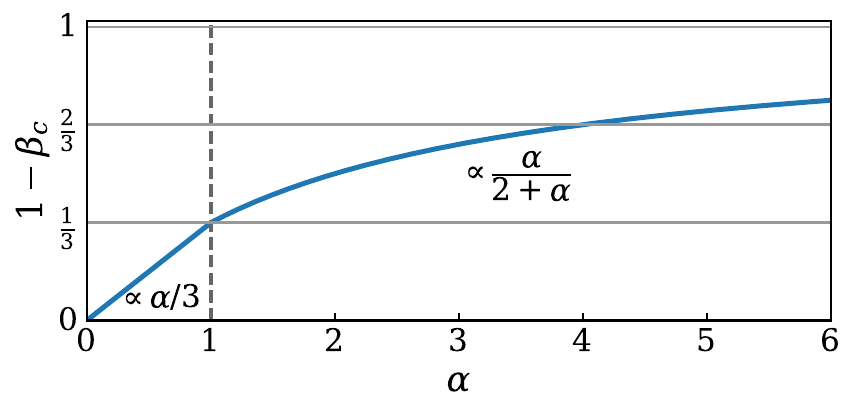}
    \caption{\textbf{Optimal window exponent for the screened typewriter sampler.}
We plot \(1-\beta_c\) versus the interaction–range exponent \(\alpha\), where the optimal spacing scales as \(w(N)\sim N^{\beta_c}\) and the expected number of accepted flips per sweep scales as \(\mathbb{E}[M]\sim N^{1-\beta_c}\). The dashed line marks the crossover at \(\alpha=1\). For non–resummable power–law tails \(0<\alpha<1\) one finds \(\beta_c=(3-\alpha)/3\) (hence \(1-\beta_c=\alpha/3\), left annotation). For resummable tails \(\alpha>1\) one obtains \(\beta_c=2/(2+\alpha)\) (so \(1-\beta_c=\alpha/(2+\alpha)\), right annotation). These exponents imply the per–flip runtime scalings quoted in the text, \(T_{\text{per flip}}=\mathcal{O}(N^{\beta_c}\log N)\).}
    \label{fig:exponentAlpha}
\end{figure}

\paragraph{Non-resummable power law ($0<\alpha<1$; superlinear hazard).}
Pairwise-dominated cross terms yield a \emph{superlinear} hazard
\begin{equation}
  p_k = p_0\,k^\gamma,\qquad \gamma=2-\alpha>1,\qquad
  p_0 \sim \frac{\tilde C}{w^{\alpha}},
  \label{eq:nrpl-hazard}
\end{equation}
so that $\gamma{+}1=3-\alpha$ and the hazard length is
$\ell \sim p_0^{-1/(3-\alpha)} \sim w^{\alpha/(3-\alpha)}$.
Let $w(N)=w_0 N^\beta$ and recall the cap \eqref{eq:cap-K2}, $K\sim N^{1-\beta}$. If $K\gg \ell$, the process freezes before the cap, and by analogy with \eqref{eq:EM-hazard} one has
\begin{equation}
  \mathbb{E}[M] \;\sim\; \ell
  \;\sim\; p_0^{-1/(3-\alpha)} \;\sim\; w^{\alpha/(3-\alpha)}
  \;\sim\; N^{\frac{\alpha\beta}{3-\alpha}}.
  \label{eq:nrpl-EM-hazard}
\end{equation}
This \textbf{increases} monotonically with $\beta$ (a larger window suppresses the hazard more strongly).

 If $K\ll \ell$, the chain essentially always reaches the cap, just as in \eqref{eq:EM-cap}.
\begin{equation}
  \mathbb{E}[M] \;\sim\; K \;\sim\; \frac{N}{w}
  \;\sim\; N^{1-\beta}.
  \label{eq:nrpl-EM-cap}
\end{equation}
This \textbf{decreases} monotonically with $\beta$ (a larger window lowers the cap).

Because the hazard-limited value rises with $\beta$ while the cap-limited value falls with $\beta$, the optimum is achieved at the \emph{crossover}, exactly as argued in \eqref{eq:crossover2}. Equating the two scalings gives
\begin{equation}
  \,\beta_c = \frac{3-\alpha}{3}\, .
  \label{eq:nrpl-crossover}
\end{equation}
At $\beta=\beta_c$, the branches \eqref{eq:nrpl-EM-hazard} and \eqref{eq:nrpl-EM-cap} coincide, yielding
\begin{equation}
\,\mathbb{E}[M]_{\mathrm{opt}} \;\sim\; N^{\,1-\beta_c}
  \;=\; N^{\alpha/3}\,  .
  \label{eq:nrpl-opt}
\end{equation}

 From the weak long-range regime (linear hazard) we had
$\beta_c=\frac{2}{2+\alpha}$; at $\alpha=1$ this gives $\beta_c=2/3$. Our non-resummable result
\eqref{eq:nrpl-crossover} also yields $\beta_c=1-\alpha/3=2/3$ at $\alpha=1$. Thus the optimal window exponent matches smoothly across the boundary $\alpha=1$.

 Plugging $\alpha=0$ into \eqref{eq:nrpl-crossover} gives
$\beta_c=1$, i.e.\ $w\propto N$ so $K\sim N/w$ is constant. Correspondingly,
\eqref{eq:nrpl-opt} gives $\mathbb{E}[M]_{\mathrm{opt}}\sim N^{0}$: there is \emph{no asymptotic sampling advantage} in the fully connected ($\alpha=0$) case, although \emph{constant-factor} speedups (from better constants in $p_0$ or reduced autocorrelation) can still be realized in practice.

\section{Practical implementation of the screened typewriter sampler}

Implementing the screened typewriter sampler on GPUs requires modifying Algorithm 2 to maximize parallelism. We process proposals for an entire batch of chains simultaneously. Because the Independent Scattering Approximation (ISA) renders defects non-interacting, we parallelize computationally intensive steps—such as link tensor construction and ABACUS recurrence—both across chains and across the proposals within a single chain. With amplitudes precomputed, a lightweight loop handles acceptance. To accommodate the limited control flow of GPUs, we do not halt execution when a chain freezes; instead, we use masking to prevent updates to the output buffer, preserving batch synchronization. Optimizing this masking scheme could yield further speedups.

The screened typewriter sampler requires exact error bounds to remain exact. These bounds can be computed by evaluating Eq.~\eqref{eq:deltaz-pre} and Eq.~\eqref{eq:key-result}, which guarantees that the screened typewriter sampler is provably unbiased. Additionally, we provide a pragmatic strategy that is easier to implement than evaluating Eq.~\eqref{eq:key-result} at every step. We allocate a buffer $\epsilon(k)$ indexed by the number of accepted flips $k$. We initialize $\epsilon(k)$ to a conservative value (e.g., 0.2) to ensure it exceeds the actual error, thereby forcing the chain to freeze prematurely. During these freeze events, we obtain the empirical error $\tilde{\epsilon}$ at no additional computational overhead by comparing the exact and approximate acceptance ratios. If $\epsilon(k)$ retains its initial default value, we overwrite it with $\tilde{\epsilon}$; otherwise, we update $\epsilon(k) \leftarrow \max\{ \epsilon(k),\gamma \tilde{\epsilon}\}$ with $\gamma>1$. This heuristic converges rapidly, self-corrects as the wavefunction evolves, and avoids calibration parameters. We note, however, that it is still a heuristic and can underestimate $\epsilon(k)$ early in training, which may introduce bias. For this reason, we discard the first few sweeps at the beginning of each training run. Empirically, after this burn-in, the error bounds are accurate and the sampler satisfies the Metropolis--Hastings rule exactly. All final observables we report (ground-state energy, observables, V-score, etc.) are obtained with a provably exact sampler.

\label{app:throughput}

\begin{figure}
    \centering
    \includegraphics[width=1.0\linewidth]{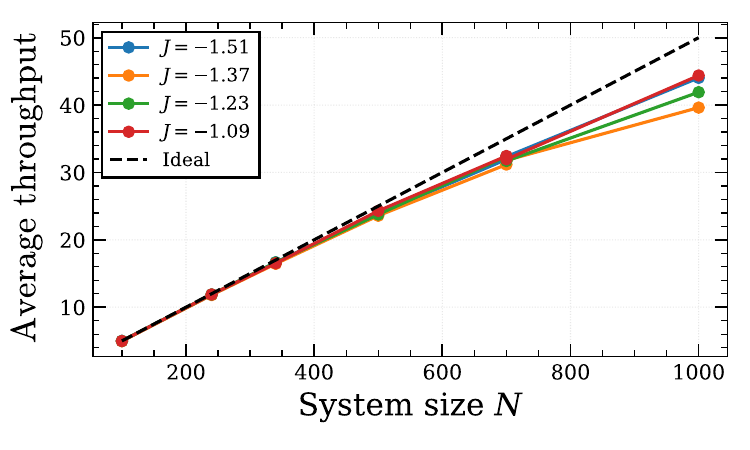}
    \caption{Scaling of average sampler throughput with system size $N$ for interaction range $\alpha=1.5$. The dashed black line indicates the theoretical maximum throughput $N/2W$ in the ideal limit of zero screening. Colored symbols represent different coupling strengths $J$. The throughput scales approximately linearly, demonstrating efficient amortization of the $\mathcal{O}(N \log N)$ update cost. Deviations from the ideal line at large $N$ signal the onset of premature freezing, an effect that is most pronounced near the critical point ($J_c \approx -1.4$) due to enhanced long-range correlations.}
    \label{fig:throughput}
\end{figure}

Figure~S\ref{fig:throughput} plots the \emph{average sampler throughput} of the screened typewriter scheme. Throughput is defined as the average number of proposals per chain between cache refreshes. Since link-tensor construction is the expensive $\mathcal{O}(N\log N)$ step, this metric quantifies how well the scheme amortizes that cost.

A dashed line marks the ideal case, where screening never triggers during a pass. With block spacing $W$, a pass contains
\begin{equation}
M_{\rm ideal}(N) \;=\; \frac{N}{2W}.
\end{equation}
proposals per chain, yielding the linear reference.

We examine models trained at $\alpha=1.5$. This setting poses the greatest challenge: long-range correlations reduce the accuracy of the independent-scattering approximation (ISA). Nevertheless, throughput scales linearly with $N$, tracking the ideal line. The curves diverge only at the largest system size ($N=1000$), signaling \emph{premature freezing}. Here, screening triggers early, forcing a cache refresh before the cap is complete. Even in this regime, the sampler retains 80--90\% of the ideal throughput.

Consequently, we keep $W$ constant. The linear scaling demonstrates that fixed spacing efficiently amortizes cache costs. The throughput drop at $N=1000$ does not justify increasing $W$; adjusting the interval width becomes necessary only for $N > 1000$.

Finally, the figure reveals the sampler's sensitivity to ground-state correlations. Throughput peaks in the ferromagnetic and paramagnetic regimes but falls near the critical point, $J_c \approx -1.4$. There, long-range correlations increase ISA interference, forcing proposals into the ambiguous window and triggering early cache refreshes.

\section{Efficient construction of link tensors}\label{app:linkTensorConstruction}
\emph{Final-stage links $T^{(l)}$.}
The final-stage link tensors $T^{(l)}$ only have rank $d$. They can therefore be computed using $d$ reverse matrix vector products. In all machine learning libraries this can be computed highly efficiently via $d$ backwards passes. 

\emph{Two-layer links $L^{(l,l-1)}$.}
The two-layer link tensors $L^{(l, l-1)}_{j, \Delta j}$ can be efficiently computed using convolutions. Specifically, we have 

\begin{align}
    L^{(l, l-1)}_{j, \Delta j} & = \sum_k \boldsymbol G_{j + \Delta j-k}^{(l)}  D_k^{(l-1)} \boldsymbol G_{k-j}^{(l-1)}  \ . 
\end{align}

Let's define the composite Green's function ${\tilde G}_{j-k, \Delta j}^{(l)} = \boldsymbol G_{j -k + \Delta j}\otimes \boldsymbol G_{-(j -k)}$. This is another convolution kernel meaning we can efficiently construct the link tensor via the matrix valued FFT 

\begin{align}
    L^{(l, l-1)}_{j, \Delta j} = \mathcal F^{-1} ( \mathcal F ( \tilde G_{\Delta j}^{(l)}) \odot  \mathcal F (D^{(l-1)}))\vert_j \ . 
\end{align}

\noindent We need this link tensor for all $j \in \{1, \dots, N\}$ but for $\Delta j \in \{-w, \dots, w+1\}$. This means the link tensor can be constructed in $\mathcal O(d^2 W N \log N) \subset \mathcal O(N\log N)$.

\emph{Multi-layer links $|l-m|\ge 3$.}
Direct nested FFTs scale as badly for more than two layers. We can instead use randomized SVDs to obtain the link tensors
\(
L^{(l,m)} \approx U^{(l,m)}\,\tilde\Sigma^{(l,m)}_k\,V^{(l,m)}
\)
 using $k\ll W d$ random probes (linear-time in $N$ for fixed $k$) and a small $\mathcal O(k^3)$ factorization on slice-shaped matrices. In the worst case $k=W d$ this incurs a total cost of $\mathcal O(W d^2 N\log N+ W d^3 N) \subset \mathcal O(N\log N)$ matrix vector multiplication cost and $\mathcal O(W^3 d^3)$ build cost. Note using randomized SVDs the error can be made arbitrarily small. Nevertheless, in this paper we empirically only need two layers to obtain accurate results. Therefore, we do not require this construction, meaning the ABACUS algorithm returns provably exact results without controlling for error.

\section{Exact algorithm to construct link tensors for arbitrary network depth}
\label{sec:hodlr_greens}

In our benchmarks, two layers of \cobalt\  suffice for high performance, allowing us to construct all link tensors efficiently using FFTs. However, because nested FFT contraction scales poorly with depth, this strategy fails for deeper networks. To demonstrate that ABACUS is mathematically exact and asymptotically efficient, we provide an advanced algorithm for constructing link tensors, even though our experiments do not require it.

We leverage additional structure in the long-range token mixer to compute link tensors efficiently. Consistent with our assumption that long-wavelength propagation is controlled by only a few modes, we model each Green's-function mixer $G^{(\ell)}\in\mathbb{R}^{N\times N}$ as a hierarchical off-diagonal low-rank (HODLR) matrix. HODLR matrix multplication is highly efficient allowing us to build the link tensors at reasonable cost.   We show that SSM impulse-response kernels (finite sums of exponentials) are exactly HODLR.

\subsection{Definition of HODLR matrices}
\label{sec:hodlr_def}

Fix an ordering of token indices $\{1,\dots,N\}$ along a 1D chain (or along a chosen space-filling order in higher
dimensions), so that contiguous index sets correspond to local spatial neighborhoods.
For simplicity assume $N=2^p$ and define a recursive binary partition of the index interval into contiguous halves.
At the top level,
\begin{equation}
I_1 := \{1,\dots,N/2\},\qquad I_2 := \{N/2+1,\dots,N\}.
\end{equation}

\begin{definition}[HODLR matrix]
\label{def:hodlr}
A matrix $A\in\mathbb{R}^{N\times N}$ is \textbf{HODLR of rank $\le r$} (with leaf size $n_0$) if it can be represented as
\begin{align}
\label{eq:hodlr_level0}
A \;=\;
\begin{pmatrix}
A_{I_1,I_1} & A_{I_1,I_2}\\[2pt]
A_{I_2,I_1} & A_{I_2,I_2}
\end{pmatrix},
A_{I_1,I_2}=U_{12}V_{12}^\top,\quad
A_{I_2,I_1}=U_{21}V_{21}^\top,
\end{align}
where $\mathrm{rank}(A_{I_1,I_2}),\mathrm{rank}(A_{I_2,I_1})\le r$, and where the \emph{diagonal blocks}
$A_{I_1,I_1}$ and $A_{I_2,I_2}$ are again HODLR of rank $\le r$ under the next binary split of $I_1$ and $I_2$,
recursively, until blocks reach size $\le n_0$ (stored densely).
Importantly, only the \emph{diagonal blocks are refined recursively}; off-diagonal blocks are stored as low-rank factors
at the level where they appear.
\end{definition}

\subsection{SSM kernels are exactly HODLR}
\label{sec:ssm_exact_hodlr}

Consider a 1D translation-invariant Green's function (Toeplitz kernel)
\begin{equation}
\label{eq:toeplitz_kernel}
G_{ij} \;=\; g(|i-j|), \qquad i,j\in\{1,\dots,N\}.
\end{equation}
We focus on SSM impulse responses, which in real space are finite sums of (possibly damped/oscillatory) modes.
For clarity we first treat real exponentials, then state the extension.

\begin{proposition}[Sum of exponentials $\Rightarrow$ exact low-rank off-diagonal blocks]
\label{prop:ssm_exact_hodlr}
Let
\begin{equation}
\label{eq:ssm_sum_exp}
g(t)=\sum_{p=1}^{s}\alpha_p \lambda_p^{t},
\qquad |\lambda_p|<1.
\end{equation}
Let $I=\{a,\dots,b\}$ and $J=\{c,\dots,d\}$ be two disjoint contiguous index intervals with $b<c$
(i.e.\ $I$ lies strictly to the left of $J$). Then the off-diagonal block $G_{I,J}$ has rank at most $s$ and factors as
\begin{equation}
\label{eq:ssm_block_factor}
G_{I,J}
\;=\;
\sum_{p=1}^{s}
u^{(p)}_I \bigl(v^{(p)}_J\bigr)^\top,
\qquad
u^{(p)}_I(i)=\lambda_p^{-i},\quad
v^{(p)}_J(j)=\alpha_p\,\lambda_p^{j}.
\end{equation}
The analogous statement holds for blocks strictly below the diagonal. In particular, for any recursive binary partition
into contiguous halves (Definition~\ref{def:hodlr}), every off-diagonal block is exactly rank $\le s$, hence $G$ is an exact
HODLR matrix with rank bound $r\le s$.
\end{proposition}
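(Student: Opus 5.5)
The argument is a direct separation-of-variables computation on the kernel, followed by a structural observation about the HODLR partition.

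First, fix $i\in I$ and $j\in J$ with $I$ strictly left of $J$. Then $b<c$ gives $j-i>0$, so $|i-j|=j-i$. Substituting into \eqref{eq:ssm_sum_exp} gives
\begin{equation*}
G_{ij}=g(j-i)=\sum_{p=1}^{s}\alpha_p\lambda_p^{\,j-i}=\sum_{p=1}^{s}\lambda_p^{-i}\,\bigl(\alpha_p\lambda_p^{\,j}\bigr).
\end{equation*}
The split $\lambda_p^{j-i}=\lambda_p^{-i}\lambda_p^{j}$ needs $\lambda_p\neq 0$. If some $\lambda_p=0$, that term contributes only at $t=0$, which never occurs on an off-diagonal block, so it can simply be dropped; the rank bound $\le s$ is unaffected. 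Reading the sum entrywise gives exactly the factorization \eqref{eq:ssm_block_factor}. A sum of $s$ outer products has rank at most $s$, which settles the upper-right case.

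Second, for blocks strictly below the diagonal ($I$ strictly right of $J$), $|i-j|=i-j$. The same argument then gives $G_{ij}=\sum_p(\alpha_p\lambda_p^{i})(\lambda_p^{-j})$, with the roles of the two factor vectors swapped, and again rank $\le s$.

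Third, I verify the HODLR claim against Definition~\ref{def:hodlr}. At every level of the recursive binary partition, the two children of a node are contiguous, disjoint intervals, one strictly left of the other. Each off-diagonal block $A_{I_1,I_2}$ or $A_{I_2,I_1}$ appearing at any recursion level is therefore covered by one of the two cases above, and has rank $\le s$. Each diagonal block $G_{I_k,I_k}$ is again the Toeplitz matrix generated by the same $g$, restricted to a contiguous interval, so the argument recurses until the leaf size $n_0$. This yields an exact HODLR representation with $r\le s$.

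For the extension to complex or oscillatory modes, the same computation goes through verbatim with $\lambda_p\in\mathbb C$. For a real $G$, conjugate pairs can be combined into a real factorization of rank $\le 2\cdot(\#\text{pairs})$; this amounts to counting $s$ as the total number of complex modes.

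The main obstacle is numerical rather than algebraic. The factors $\lambda_p^{-i}$ grow exponentially when $|\lambda_p|<1$, so the factorization is exact but ill-conditioned. I would note that rescaling by $\lambda_p^{-a}$ and $\lambda_p^{a}$, i.e.\ using $u_I^{(p)}(i)=\lambda_p^{b-i}$ and $v^{(p)}_J(j)=\alpha_p\lambda_p^{j-b}$, keeps all entries bounded by $1$ and $|\alpha_p|$. The rank statement itself needs no such care.
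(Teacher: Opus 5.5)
Your proof is correct and follows the paper's own argument: on a block with $I$ strictly left of $J$, you split $\lambda_p^{j-i}=\lambda_p^{-i}\lambda_p^{j}$ to get a sum of $s$ outer products. You also spell out the below-diagonal case, the $\lambda_p=0$ edge case and the recursion over the HODLR partition, all of which the paper leaves as ``analogous''. One cosmetic slip in your conditioning remark: the rescaling you actually write is by $\lambda_p^{b}$ and $\lambda_p^{-b}$, not $\lambda_p^{\pm a}$, and the formula you give is the correct, bounded one.
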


\begin{proof}
If $i\in I$ and $j\in J$ with $i<j$, then $|i-j|=j-i$ and
\begin{equation}
G_{ij}=\sum_{p=1}^{s}\alpha_p \lambda_p^{j-i}
=\sum_{p=1}^{s} \bigl(\lambda_p^{-i}\bigr)\bigl(\alpha_p \lambda_p^{j}\bigr),
\end{equation}
which is precisely the separable expansion in Eq.~\eqref{eq:ssm_block_factor}.
\end{proof}

\begin{remark}[Damped oscillations / complex poles]
Many SSM kernels are sums of damped oscillatory modes corresponding to complex poles inside the unit disk.
Writing such modes as complex exponentials (or equivalently as real $\exp(-t/\xi)\cos(\omega t)$ and
$\exp(-t/\xi)\sin(\omega t)$ pairs) preserves the same separability argument; the exact HODLR rank bound is
proportional to the number of modes (counting complex-conjugate pairs appropriately).
\end{remark}

\subsection{Link tensor construction}

Let the network have depth $L$ with layers indexed by $\ell=1,\dots,L$.
Let $D^{(\ell)}(\boldsymbol \sigma_0)$ denote the frozen-background diagonal (local) vertex at layer $\ell$
(see main text), and let $G^{(\ell)}$ be the corresponding Green's-function token mixer.
Define the composite layer operator
\begin{align}
\label{eq:B_def}
B^{(\ell)} &:= D^{(\ell)}(\boldsymbol \sigma_0)\,G^{(\ell)}.
\end{align}
\noindent
For integers $1\le a\le b\le L$, define the consecutive product
\begin{align}
\label{eq:X_def}
X_{a,b} &:= B^{(a)}B^{(a+1)}\cdots B^{(b)}, \qquad X_{a,a}=B^{(a)},
\end{align}
\noindent
and use the convention
\begin{align}
\label{eq:X_empty}
X_{a,b} &:= I \qquad (a>b).
\end{align}
\noindent
Let $P_{S,j}$ be the projector onto the local patch $S_j$ (width $W$) centered at site $j$.
For any pair of layers $\ell>m$, we cache the local interlayer link tensor as the patch-restricted block
\begin{align}
\label{eq:L_def}
\mathcal{L}^{(\ell,m)}_j
&:= P_{S,j}\,\Bigl(G^{(\ell)}\,X_{m+1,\ell-1}\Bigr)\,P_{S,j}^{\top}
\in \mathbb{R}^{dW\times dW}.
\end{align}
\noindent
For $\ell=m+1$, the segment is empty and Eq.~\eqref{eq:X_empty} gives $X_{m+1,\ell-1}=I$.

\subsection{Parallel construction via scans}

The previous section outlined how to construct the link tensors using HODLR multiplication. This section targets construction of multiple link tensors in parallel on GPUs. Naivly, one would need $\mathcal O(L^3)$ HODLR multiplications to form all link tensors. However, since HOLDR multipication is associative, one can reduce the cost to $O(L^2 \log L)$ using a parallel prefix (Blelloch) scan \cite{blelloch_synthesis_1990}. 

Fix an endpoint $b\in\{1,\dots,L\}$ and consider the reversed list
$\bigl(B^{(b)},B^{(b-1)},\dots,B^{(1)}\bigr)$.
A parallel prefix-scan over this list yields all prefix products
\begin{align}
\label{eq:prefix_def}
P^{(b)}_t
&:= B^{(b)} \odot B^{(b-1)} \odot \cdots \odot B^{(b-t+1)},
\qquad t=1,\dots,b.
\end{align}
\noindent
These prefixes correspond exactly to consecutive products ending at $b$ via the index map
\begin{align}
\label{eq:X_from_prefix}
X_{a,b} &= P^{(b)}_{\,b-a+1}.
\end{align}
\noindent
Thus, one scan computes the entire family $\{X_{a,b}\}_{a=1}^{b}$. Running the same scan independently for
different endpoints is precisely “parallel scans starting at different nodes” of the layer chain.

To construct all link tensors, for each target layer $\ell$ we run a scan with endpoint $b=\ell-1$ to obtain
all segments $\{X_{m+1,\ell-1}\}_{m=0}^{\ell-1}$ using Eq.~\eqref{eq:X_from_prefix}.
For each $m$, we then form $G^{(\ell)}X_{m+1,\ell-1}$ (HODLR multiply+recompress) and extract all local blocks
$\mathcal{L}^{(\ell,m)}_j$ using Eq.~\eqref{eq:L_def}; all patch indices $j$ are independent and hence parallel.

\subsection{Cost}

Let one HODLR multiply+recompress cost
\begin{align}
\label{eq:Tmm}
T_{\mathrm{mm}}(N,r)
&= \Theta\!\bigl(N r^2 \log(N/n_0)\bigr).
\end{align}
\noindent
A scan of length $b$ uses $\Theta(b)$ HODLR multiplications (linear work) and has $\Theta(\log b)$ scan stages
(logarithmic depth) \cite{blelloch_synthesis_1990}. Summing the work over endpoints $b=1,\dots,L$ gives $\Theta(L^2)$
HODLR multiplications, hence
\begin{align}
\label{eq:work_total}
T_{\mathrm{work}}
&= \Theta(L^2)\,T_{\mathrm{mm}}(N,r)
= \Theta\!\bigl(L^2 N r^2 \log(N/n_0)\bigr).
\end{align}
\noindent
Running all endpoint scans in parallel yields overall depth
\begin{align}
\label{eq:depth_total}
T_{\mathrm{depth}}
&= \Theta(\log L)\,T_{\mathrm{mm}}(N,r).
\end{align}
\noindent
Forming $B^{(\ell)}$ in Eq.~\eqref{eq:B_def} by diagonal scaling is lower order than Eq.~\eqref{eq:work_total},
and patch-block extraction in Eq.~\eqref{eq:L_def} is embarrassingly parallel over $(\ell,m,j)$.

\noindent
This scan-based construction therefore computes all cached link tensors with $\Theta(\log L)$ depth and
log-linear HODLR costs in the system size $N$. This shows link tensors for ABACUS can be efficiently constructed independent of network depth.

\section{DysonNet in two dimensions}
\label{app:ABACUS2D}

We briefly extend the \cobalt\  block to a 2D lattice with $N=L_xL_y$ sites. We index lattice sites by $j=(j_x,j_y)$ and write translation-invariant kernels as functions of the displacement $j-k$. The \cobalt\  block remains
\begin{equation}
\Phi(\boldsymbol \sigma)\,h \;=\; D(\boldsymbol \sigma)\,G\,h,
\end{equation}
with the same residual stacking and mean-pooling readout $P_M$ as in the main text.

The 2D token mixer $G$ is a translationally invariant convolution operator (Toeplitz/circulant under periodic boundaries),
\begin{equation}
(Gh)_{j,c} \;=\; \sum_{r} G_{j-r,c}\,h_{r,c},
\end{equation}
which is implemented with 2D FFTs channel-wise.
\begin{equation}
Gh \;=\; \mathcal{F}_2^{-1}\!\big(\mathcal{F}_2(G)\odot \mathcal{F}_2(h)\big),
\end{equation}
where $\mathcal{F}_2$ denotes the discrete 2D Fourier transform over the $L_x\times L_y$ grid and $\odot$ is pointwise multiplication.

The local nonlinearity $D(\boldsymbol \sigma)$ acts position-wise on each embedding vector, but may depend on a bounded \emph{patch} around $j$. Concretely, for a fixed patch radius $w$ define the square patch
\begin{equation}
S_j \;:=\;\{\,k:\ \|k-j\|_\infty \le w\,\}, \qquad W := |S_j|=(2w+1)^2,
\end{equation}
and parameterize the local vertex as
\begin{equation}
(D(\boldsymbol \sigma)h)_{j} \;=\; D\!\big(\boldsymbol \sigma|_{S_{j}}\big)\,h_j,
\end{equation}
i.e.\ $D(\cdot)$ ``attends'' to the $W$ tokens in the patch (via a small CNN/MLP over the patch) and outputs a $d\times d$ matrix multiplying the center token $h_j$.

For a single spin flip at position $j$ relative to a reference configuration $\boldsymbol \sigma_0$, we decompose each layer's vertex as
\begin{equation}
D^{(l)}(\boldsymbol \sigma) \;=\; D^{(l)}_0 + \Delta D^{(l)}, \qquad D^{(l)}_0 := D^{(l)}(\boldsymbol \sigma_0),
\end{equation}
where $\Delta D^{(l)}$ is nonzero only on a bounded region: in 2D it is supported on (a small union of) patches around the flip, rather than on an interval. Let $P_{S,j}$ denote the projector that restricts activations to the patch $S_j$ (so $P_{S,j}h\in\mathbb{R}^{W\times d}$). Then the link tensors are defined exactly as before, with $P_{S,j}$ now projecting onto a patch.
\begin{align}
T^{(l)} \;&=\; P_M\,G^f\!\left(\sum_{M=l}^{L}\prod_{n=l+1}^{M} D^{(n)}_0G^{(n)}\right)P_{S,j}^{\top},\\
L^{(l,m)} \;&=\; P_{S,j}\,G^{(l)}\!\left(\prod_{n=m+1}^{l-1} D^{(n)}_0G^{(n)}\right)P_{S,j}^{\top}, \quad 1\le m\le l-2.
\end{align}
Defining patch-restricted activations $\tilde h^{(l)} = P_{S,j}h^{(l)}\in\mathbb{R}^{W\times d}$, $\tilde h^{(l)}_0=P_{S,j}h^{(l)}_0$, and $\tilde G^{(l)} := P_{S,j}G^{(l)}P_{S,j}^{\top}$, the \cobalt/ABACUS recurrence is unchanged except for interpreting the restriction as a patch.
\begin{align}
\tilde h^{(l)}
&=\Delta D^{(l)}
\left(
\tilde h^{(l)}_0
+ \tilde G^{(l)}\tilde h^{(l-1)}
+ \sum_{m=1}^{l-2} L^{(l,m)}\tilde h^{(m)}
\right),\\
\qquad
\boldsymbol\Omega
&= A\!\left(P_M\tilde h^{(L)} + \sum_{l=1}^{L} T^{(l)}\tilde h^{(l)}\right).
\end{align}
Here $\tilde G^{(l)}\tilde h^{(l-1)}$ denotes propagating a patch-supported input through the convolution and then restricting back to the patch; operationally this is a contraction of the local $(W\times W)$ block of the Toeplitz operator (or an equivalent SSM restriction), exactly as in 1D.

Complexity statements carry over verbatim upon replacing the slice width $W=2w+1$ by the patch size $W=(2w+1)^2$. Given precomputed links, the local-update runtime remains independent of $N$ and scales as  $O(L^2 W^2 d^2)$,
Importantly, the link tensors depend on the \emph{number of tokens in the restricted region} ($W$), not on whether that region is a 1D interval or a 2D area. For \cobalt, link construction remains log-linear in $N$ because all required background propagations reduce to 2D convolutions, which are computed with 2D FFTs (and multi-layer links can be handled by the same low-rank probing strategy, with each probe dominated by $O(N\log N)$ FFT-based matvecs for fixed $W,d$.

\section{Details on the implementation of the DysonNet block}
\label{app:cobalt-details}
Here we provide additional details on the implementation of S4 and the CNN.

Here we summarize the S4 parameterization from Ref.~\cite{gu_efficiently_2022} in notation consistent with the main-text definitions in Eqs.~\eqref{eq:ssm_discretized} and \eqref{eq:resolvent}. We start from the continuous-time state equation
\begin{align}
    \partial_x \boldsymbol y(x) = \bar A\,\boldsymbol y(x) + \bar B\,\boldsymbol h(x).
\end{align}

\noindent We discretize it with the bilinear transform
\begin{align}
     A &= \Bigl(I - \frac \Delta2 \bar A\Bigr)^{-1}\Bigl(I+\frac \Delta2 \bar A\Bigr), \\
     B &= \Bigl(I - \frac{\Delta} 2\bar A\Bigr)^{-1}\Delta \bar B. \label{eq:discretization}
\end{align}

\noindent where $\Delta$ is the step size. For $z_k=e^{i2\pi k/N}$, Eq.~\eqref{eq:resolvent} becomes
\begin{align}
    G_c(z_k)
    &= C_c\,(I-z_k A_c)^{-1} B_c \\
    &= C_c\,c(z_k)\,\bigl(g(z_k)I-\Delta \bar A_c\bigr)^{-1}\bar B_c .
\end{align}
with
\begin{align}
    c(z)=\frac{2\Delta}{1+z},
    \qquad
    g(z)=\frac{2(1-z)}{1+z}.
\end{align}
For diagonal $\bar A_c=\Lambda_c$, define the scalar Cauchy-kernel contraction
\begin{align}
    \widehat G_{0,c}(z;u,v)=\sum_{i=1}^{s}\frac{v_i u_i}{g(z)-\Delta \Lambda_{c,i}} .
\end{align}

\noindent For the S4 parameterization $\bar A_c=\Lambda_c+\boldsymbol p_c\boldsymbol q_c^{\top}$ (matching the main text), Woodbury gives~\cite{gu_efficiently_2022}
\begin{align}
\begin{split}
    G_c(z) &= c(z)\Big( \widehat G_{0,c}(z;\bar B_c,C_c) \\
    &\quad - \widehat G_{0,c}(z;\boldsymbol p_c,C_c)\,[1+\widehat G_{0,c}(z;\boldsymbol p_c,\boldsymbol q_c^{*})]^{-1}\widehat G_{0,c}(z;\bar B_c,\boldsymbol q_c^{*})\Big). \label{eq:greenWoodbury}
\end{split}
\end{align}

\noindent This is the Green's function practically implemented in our network.

For parameterizing the local vertex $D(\boldsymbol \sigma)$ we use a short-ranged CNN kernel, as this enforces a finite receptive field $D_j(\boldsymbol \sigma)=D_j(\sigma_{j-w},\dots,\sigma_{j+w})$. In our model we create a second data stream $\boldsymbol{\phi}^{(l)}$ (see Fig.~\ref{fig:COBALTPractical}) with vectors $\boldsymbol \phi_j^{(l)}\in \mathbb R^d$ for sites $j=1,\dots,N$. We compute $\boldsymbol \phi^{(1)}=E_\phi\boldsymbol \sigma$ with embedding matrix $E_\phi\in \mathbb R^{d\times 1}$. Each layer applies (i) a site-wise dense map, (ii) a short-range depth-wise convolution, and (iii) an elementwise nonlinearity $g$ (we use $\mathrm{SiLU}$).

\begin{align}
\boldsymbol {\tilde \phi}_j^{(l)} &= W_D^{(l)}\boldsymbol \phi_j^{(l)} + \boldsymbol w^{(l)}, \\
\boldsymbol {\phi}_j^{(l+1)} &= g\!\bigg( \sum_{r=-w}^w K_r^{(l)} \odot \boldsymbol {\tilde \phi}_{j+r}^{(l)} + \boldsymbol k^{(l)} \bigg).
\end{align}

\noindent 
Here, 
$W_D^{(l)}\in\mathbb{R}^{d\times d}$ and $\boldsymbol w^{(l)}\in\mathbb{R}^{d}$ are the site-wise
(dense) weight matrix and bias at layer $l$.
The convolution has half-width $w$ with offsets $r\in[-w,w]$; indices $j+r$ are taken modulo $N$
(circular padding). $K_r^{(l)}\in\mathbb{R}^{d}$ is the kernel at offset $r$ and
$\boldsymbol k^{(l)}\in\mathbb{R}^{d}$ its bias.
The nonlinearity is applied element-wise and chosen as $\mathrm{SiLU}$:
$g(x)=\mathrm{SiLU}(x)=x/(1+e^{-x})$. \\ 

To match Eq.~\eqref{eq:Dpractical}, we define
\begin{align}
    \chi(\boldsymbol\phi^{(l-1)}_j) &= g\!\big(A_D^{(l)} \boldsymbol\phi^{(l-1)}_j + \boldsymbol b^{(l)}\big), \\
    D_j(\boldsymbol \sigma)  &= U^{(l)} \,\mathrm{diag}\!\big(\chi(\boldsymbol\phi^{(l-1)}_j)\big)\, V^{(l)} .
\end{align}
Here $A_D^{(l)}\!\in\!\mathbb{R}^{d\times d}$ and $\boldsymbol b^{(l)}\!\in\!\mathbb{R}^{d}$ are trainable dense weights, and $g$ is an elementwise nonlinearity (we use $\mathrm{SiLU}$). The vector $\chi(\boldsymbol\phi^{(l-1)}_j)\!\in\!\mathbb{R}^{d}$ parametrizes the configuration-dependent spectrum, while $U^{(l)},V^{(l)}\!\in\!\mathbb{R}^{d\times d}$ are layer-specific, configuration-independent bases (fixed left/right “eigenvectors”). Thus $D_j(\boldsymbol \sigma)\!\in\!\mathbb{R}^{d\times d}$ has fixed eigenvectors but spin-configuration–dependent eigenvalues through $\boldsymbol\phi^{(l-1)}_j$. As in the main text, we also inject the $\boldsymbol\phi^{(l)}$ stream additively into the mixer stream before applying the SSM.
In this work we take the parameters to be real. We note as this is inside the network this does not limit expressivity to real valued wavefunctions, as a complex vector can be embedded in a real-valued vector space. The final layer determines the sign structure of the network \footnote{If a symmetric parameterization is desired, one may set $V^{(l)}\!=\!U^{(l)\top}$.}.

\section{Small system validation TFIM}

We evaluate the neural quantum state (NQS) for the long-ranged transverse-field Ising model at $N=20$ on a 2D parameter grid—vertical axis $J\in[-5,5]$, horizontal axis $\alpha\in[0,6]$; see  Fig.~\ref{fig:infidelity}. At each grid point we report the overlap error (infidelity) with the exact ground state. 
The fidelity is $F=|\langle\Psi_0|\Psi_{\mathrm{NQS}}\rangle|^2$; in the presence of a (near-)degenerate ground-state manifold we compute the overlap with the ground-state subspace, $F=\sum_j|\langle\Psi_{0,j}|\Psi_{\mathrm{NQS}}\rangle|^2$, treating states as degenerate when their energy splitting is $<10^{-6}$. The right panel of Fig.~\ref{fig:infidelity} shows the ground-state energy error, $\Delta E=|E_0-E_{\mathrm{NQS}}|$. For each grid point we perform three independent training runs and report the median; no training instabilities were observed.

Across the grid, the median infidelity is $2.99\times10^{-4}$: most points sit at $\lesssim10^{-3}$, with a few localized pockets reaching $\sim10^{-2}$ near short-ranged (large-$\alpha$) critical regions where optimization is harder---still small in absolute terms. Ground-state energies are uniformly accurate: typical $\Delta E$ is $2$--$5\times10^{-4}$, spanning $8.83\times10^{-5}$ to $9.54\times10^{-4}$. Overall, energies are robust across the phase diagram, while fidelity remains tight almost everywhere and softens only mildly near criticality.

\begin{figure}
    \centering
    \includegraphics[width=\linewidth]{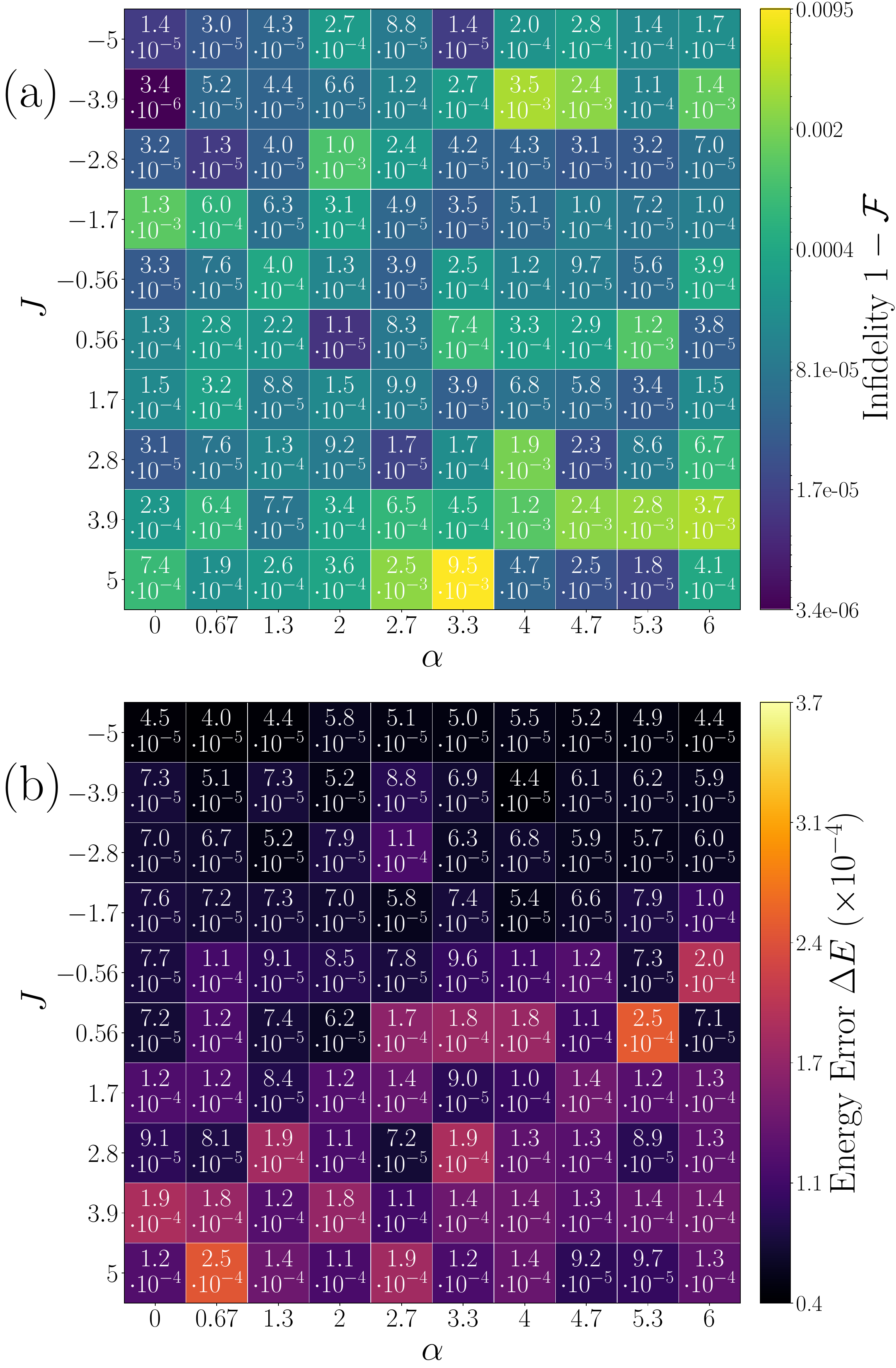}  
    \caption{\textbf{Benchmark of \cobalt\ on the long-ranged transverse-field Ising model at $N=20$.} We sweep the Hamiltonian parameters over a 2D grid—vertical axis $J\in[-5,5]$, horizontal axis $\alpha\in[0,6]$—and report two metrics per grid point. \textbf{Left panel:} the infidelity $1-F$, where $F$ is the overlap with the exact ground state; in degenerate regions we take the overlap with the full ground-state manifold. Values span from $\sim10^{-5}$ up to $\sim10^{-2}$, with most of the diagram at $\lesssim10^{-3}$ (lower is better). \textbf{Right panel:} the ground-state energy error $\Delta E=|E_{\text{NQS}}-E_0|$, reported at the $10^{-4}$ scale and typically in the $2\!-\!5\times10^{-4}$ range, with a few larger pockets. 
}
    \label{fig:infidelity}
\end{figure}

\section{Critical exponents and finite-size scaling}
To test whether DysonNet captures not only ground-state energies but also the long-distance structure of critical wavefunctions, we perform a finite-size-scaling analysis of the long-range TFIM at representative interaction exponents spanning the distinct universality regimes. Specifically, we compute the squared magnetization $\langle m^2\rangle$ for a range of system sizes and couplings near criticality and collapse the data using the scaling form
\begin{align}
N^{2\beta/\nu}\langle m^2\rangle = f\!\left((J-J_c)N^{1/\nu}\right),
\end{align}
with $J_c$, $\nu$, and $\beta$ treated as fit parameters. The collapse is obtained using the \texttt{PyFSSA} package~\cite{sorge_pyfssa_2015}, and the quoted uncertainties are estimated conservatively as the maximum of two procedures: a jackknife analysis in which one system size is omitted at a time, and a bootstrap resampling analysis over the full dataset.

The resulting collapses, shown in Fig.~\ref{fig:collapseTripple}, are of high quality across all three representative values of $\alpha$, demonstrating that DysonNet reproduces the universal critical behavior of the model rather than merely fitting local observables or variational energies. The extracted critical exponents listed in Table~\ref{tab:crit-exp-compare} agree closely with stochastic-series-expansion benchmarks and known theoretical predictions.  Together, these results show that the architecture remains accurate even at criticality and that access to larger system sizes substantially improves the reliability of exponent extraction.

\begin{figure*}
  \centering
  \subfloat[]{%
    \begin{overpic}[width=0.32\textwidth]{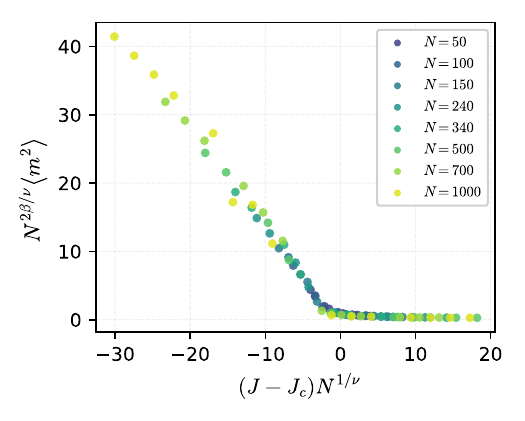}
      \put(5,71){\colorbox{white}{(a)}}
    \end{overpic}%
    \label{fig:collapse15}}
  \hfill
  \subfloat[]{%
    \begin{overpic}[width=0.32\textwidth]{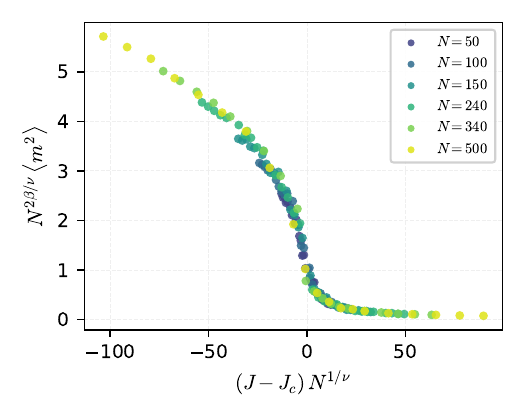}
      \put(5,71){\colorbox{white}{(b)}}
    \end{overpic}%
    \label{fig:collapse25}}
  \hfill
  \subfloat[]{%
    \begin{overpic}[width=0.32\textwidth]{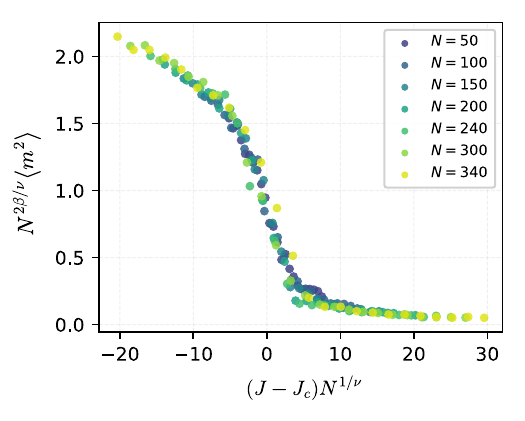}
      \put(5,71){\colorbox{white}{(c)}}
    \end{overpic}%
    \label{fig:collapse6}}
  \caption{ \textbf{Finite-size scaling analysis of the squared magnetization $\langle m^2 \rangle$ obtained via the \cobalt\ Ansatz.} The panels display data collapses for long-range decay exponents (a) $\alpha = 1.5$, (b) $\alpha = 2.5$, and (c) $\alpha = 6.0$. The data are rescaled according to the scaling ansatz $N^{2\beta/\nu} \langle m^2 \rangle$ versus $(J - J_c)N^{1/\nu}$, using the critical parameters listed in Table~\ref{tab:crit-exp-compare}. The high quality of the collapse across system sizes ranging from $N=50$ to $N=1000$ (for $\alpha=1.5$) validates the accuracy of the extracted critical exponents and critical points.}
  \label{fig:collapseTripple}
\end{figure*}

\section{Small system validation \texorpdfstring{$J_1$--$J_2$}{J1--J2}}

\label{app:j1j2}
The antiferromagnetic $J_1$--$J_2$ model is the spin-$1/2$ Heisenberg chain with competing nearest- and next-nearest-neighbor exchange,
\begin{equation}
    H = J_1 \sum_{i=1}^{N} \mathbf{\hat S}_i \!\cdot\! \mathbf{\hat S}_{i+1}
      + J_2 \sum_{i=1}^{N} \mathbf{\hat S}_i \!\cdot\! \mathbf{\hat S}_{i+2},
    \label{eq:j1j2_hamiltonian}
\end{equation}
with periodic boundary conditions. For $0<J_2/J_1\lesssim 0.24$ the ground state remains critical and is well described by a Luttinger liquid with algebraically decaying spin correlations, while increasing $J_2$ introduces frustration that reshapes short- and intermediate-range correlations~\cite{okamoto_fluid-dimer_1992,eggert_numerical_1996,white_dimerization_1996}. At $J_2/J_1\simeq 0.24$ the chain undergoes a Berezinskii--Kosterlitz--Thouless transition into a gapped, spontaneously dimerized phase, which continuously connects to the Majumdar--Ghosh point at $J_2/J_1=1/2$ where the ground state is exactly a product of singlet dimers (up to a twofold translation degeneracy)~\cite{okamoto_fluid-dimer_1992,majumdar_next-nearest-neighbor_1969}. This combination of (i) a well-established phase diagram containing both critical and gapped regimes, (ii) strong frustration and entanglement that stress variational expressivity, and (iii) exact-diagonalization access at small $N$ makes the $J_1$--$J_2$ chain a standard, stringent benchmark for many-body wavefunction ans\"atze~\cite{carleo_solving_2017,liang_solving_2018,viteritti_transformer_2023}.

Building on this standard benchmark, we evaluate the \cobalt\ architecture on the antiferromagnetic $J_1$--$J_2$ chain with $N=20$ sites across the frustration range $0 \le J_2/J_1 \le 0.5$ (Fig.~\ref{fig:j1j2_small_system}). \cobalt\ maintains high accuracy throughout, with the relative ground-state energy error $\varepsilon$ remaining below $4\times 10^{-4}$ for all couplings. The error reaches a minimum of $2.17\times 10^{-5}$ at $J_2/J_1 \approx 0.08$ and then increases smoothly with frustration, attaining a maximum of $3.71\times 10^{-4}$ near the strongly frustrated regime. Importantly, we observe no sharp loss of precision in the vicinity of the quantum critical point at $J_2/J_1 \approx 0.24$~\cite{okamoto_fluid-dimer_1992,eggert_numerical_1996}, suggesting that \cobalt\ captures the ground-state correlations consistently across the transition. The variational V-score mirrors this trend, ranging from $1.91\times 10^{-4}$ to $2.44\times 10^{-3}$.

\begin{figure}[H]
    \centering
    \includegraphics[width=\linewidth]{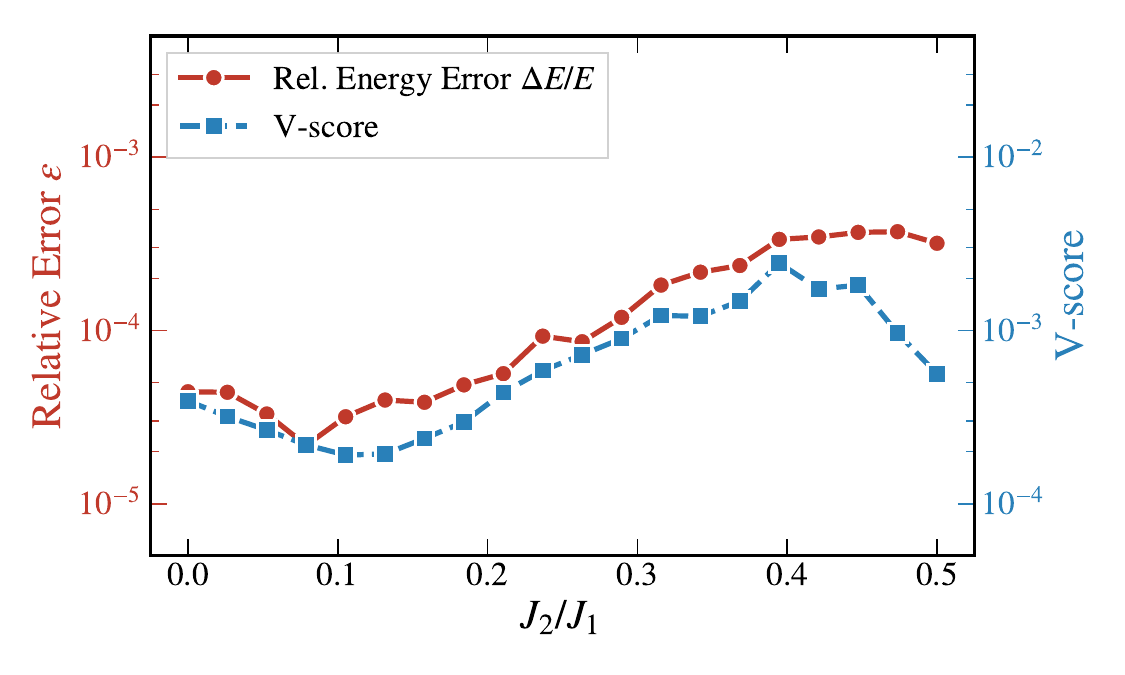}
    \caption{\textbf{Benchmarks of \cobalt\  applied to the 1D antiferromagnetic $J_1$--$J_2$ spin chain with $N=20$ spins.} The plot compares the relative error in the ground-state energy, defined as $\varepsilon = |E_{\text{\cobalt}} - E_{\text{ED}}|/|E_{\text{ED}}|$ (red circles, left logarithmic axis), and the variational V-score (blue squares, right logarithmic axis) as a function of the frustration ratio $J_2/J_1$. Reference ground-state energies $E_{\text{ED}}$ are obtained via exact diagonalization. The results illustrate the ability of \cobalt\  to maintain high accuracy across the frustration regime, with the V-score serving as a reliable unsupervised proxy for the true energy error.}
    \label{fig:j1j2_small_system}
\end{figure}

\section{Network parameters}
\label{app:training_details}
Unless otherwise stated, optimization uses standard SR and TF32 precision on NVIDIA GPUs. Sweep counts are scaled linearly with system size $N$ in all phases, including area-law regimes. Chains are randomly initialized, the burn-in length is five sweeps, and global spin inversion is attempted after every $\left\lceil N/(2w+1)\right\rceil$ proposals. For the RBM baseline we use hidden-unit density $\alpha_{\mathrm{RBM}}=10$ (\(N_h=10N_v\)), which is distinct from the Hamiltonian decay exponent $\alpha$. Most training runs use NVIDIA Tesla P100, T4, and L4 GPUs; the largest runs on criticality ($N>600$) are executed on a single NVIDIA A40. We note or $N<500$ DysonNet can be trained in google collab or Kaggle notebooks, meaning DysonNet makes large system sizes generally accessible without the need for high-end  compute clusters. For finite-size scaling we warm-start larger systems from converged smaller systems. Critical exponents are extracted with PyFSSA~\cite{sorge_pyfssa_2015}. Uncertainties for the critical exponents are computed by taking the maximum over the bootstrapping and jacknife approach.  
\onecolumngrid

\begin{table}[htbp]
\centering
\caption{Matched architecture hyperparameters for Fig.~\ref{fig:kernel_runtime}(b,c). ViT and linearized ViT are configured identically.}
\label{tab:runtime_matched_hparams}
\begin{tabular}{lcc}
\toprule
\textbf{Hyperparameter} & \textbf{ViT / Linearized ViT} & \textbf{\cobalt} \\
\midrule
Layers / Blocks & 2 & 2 \\
Token size & 2 & 2 \\
Embedding dimension & 2 & 2 \\
FFN layers per block & 1 & 1 \\
LogCosh hidden units & 2 & 2 \\
Attention heads & 2 & N/A \\
S4 state-space dimension & N/A & 12 \\
\bottomrule
\end{tabular}
\end{table}

\begin{table}[htbp]
    \centering
    \caption{Hyperparameters for Network, Sampler, and Optimizer (Run 1).}
    \label{tab:hyperparameters_run1}
    \begin{tabular}{l l}
        \toprule
        \textbf{Parameter} & \textbf{Value / Scaling} \\
        \midrule
        \multicolumn{2}{c}{\textit{Network Architecture (\cobalt)}} \\
        \midrule
        Block Type & \cobalt (S4 kernels, bidirectional complex) \\
        Number of Layers ($N_{layers}$) & 2  \\
        Embedding Dimension ($d_{model}$) & 14 \\
        S4 State Dimension ($N_{s4}$) & 12  \\
        Token Size & 2 \\
        LogCosh Hidden Units & 2 \\
        Convolution Kernel Size & 4 \\
        S4 initialization & Hippo \\ 
        Normalization & ActNorm \\
        \midrule
        \multicolumn{2}{c}{\textit{Sampler (Metropolis)}} \\
        \midrule
        Number of Chains ($N_{chains}$) & 512 \\
        Number of Samples ($N_{samples}$) & 2048 \\
        Chain Initialization & Random spin configurations \\
        Burn-in Length & 5 sweeps \\
        Global Spin Inversion Interval & $\left\lceil N/(2w+1)\right\rceil$ proposals \\
        ABACUS & Enabled if $N > 50$ \\
        Fast Sampler Strategy & Enabled if $N > 100$ \\
        Sweep Size ($S_{sweep}$) & $\approx 1.4 \times S_{base}$ \\
        \quad \textit{Base Sweep Unit} ($S_{base}$) & $N \cdot C_{crit}$ \\
        \quad \textit{Scaling with System Size} & Linear in $N$ (all phases) \\
        \quad \textit{Critical Factor} ($C_{crit}$) & $4 \cdot \max(0.8 - |J - J_c|, 0) + 1$ \\
        \midrule
        \multicolumn{2}{c}{\textit{Optimizer (SR/SGD with Schedule)}} \\
        \midrule
        SR Variant & Standard SR (no minSR) \\
        Training Iterations & 400 \\
        Warmup Steps & 150 \\
        Peak Learning Rate ($\eta$) & $\max(3.5, 5.0 \cdot \frac{N}{340})$ \\
        Diagonal Shift Schedule & Linear ($10^{-2} \to 10^{-4}$) \\
        \bottomrule
    \end{tabular}
\end{table}

\begin{table}[htbp]
    \centering
    \caption{Transformer (SpinViT) Architecture Hyperparameters (Run 1).}
    \label{tab:hyperparameters_spinvits_run1}
    \begin{tabular}{l l}
        \toprule
        \textbf{Parameter} & \textbf{Value / Description} \\
        \midrule
        \multicolumn{2}{c}{\textit{Network Architecture (BatchedSpinViT)}} \\
        \midrule
        Model Type & BatchedSpinViT (Transformer / ViT-style) \\
        Linearized ViT Baseline & Identical hyperparameters \\
        Token Size & 5 \\
        Embedding Dimension ($d_{\text{model}}$) & 14 \\
        Number of Attention Heads ($N_{\text{heads}}$) & 2 \\
        Number of Transformer Blocks ($N_{\text{blocks}}$) & 2 \\
        FFN Layers per Block ($N_{\text{ffn}}$) & 3 \\
        Final Architecture & $(5,)$ \\
        Complex-Valued Model & False \\
        \bottomrule
    \end{tabular}
\end{table}

\end{document}